\newcommand{\etal}{{et~al. }}
\newcommand{\removed}[1]{}
\title{On the Minimum Consistent Subset Problem
}
\author{Ahmad Biniaz\thanks{Cheriton School of Computer Science, University of Waterloo, ahmad.biniaz@gmail.com}
	\and Sergio Cabello\thanks{Department of Mathematics, IMFM and FMF, University of Ljubljana, sergio.cabello@fmf.uni-lj.si} 
	\and Paz Carmi\thanks{Department of Computer Science, Ben-Gurion University of the Negev, carmip@cs.bgu.ac.il} 
	\and Jean-Lou De Carufel\thanks{School of Electrical Engineering and Computer Science, University of Ottawa, jdecaruf@uottawa.ca} 
	\and Anil Maheshwari\thanks{School of Computer Science, Carleton University, \{anil, michiel\}@scs.carleton.ca, saeed.mehrabi@carleton.ca}
	\and  Saeed Mehrabi\footnotemark[5]
	\and  Michiel Smid\footnotemark[5]
}
\date{}
\newtheorem{lemma}{Lemma}
\newtheorem{corollary}{Corollary}
\newtheorem{theorem}{Theorem}
\newtheorem*{problem*}{Problem}
\newtheorem*{invariant*}{Invariant}
\newtheorem{question}{Question}
\newcommand{\RR}{\mathbb{R}}
\newcommand{\CH}[1]{CH(#1)}
\newcommand{\bisector}[2]{\beta(#1,#2)}
\begin{document}
	\maketitle
	\begin{abstract}
	Let $P$ be a set of $n$ colored points in the plane. Introduced by Hart (1968), a {\em consistent subset} of $P$, is a set $S\subseteq P$ such that for every point $p$ in $P\setminus S$, the closest point of $p$ in $S$ has the same color as $p$. The consistent subset problem is to find a consistent subset of $P$ with minimum cardinality. This problem is known to be NP-complete even for two-colored point sets. Since the initial presentation of this problem, aside from the hardness results, there has not been significant progress from the algorithmic point of view. In this paper we present the following algorithmic results: 
	\begin{enumerate}
		\item The first subexponential-time algorithm for the consistent subset problem.
		\item An $O(n\log n)$-time algorithm that finds a consistent subset of size two in two-colored point sets (if such a subset exists). Towards our proof of this running time we present a deterministic $O(n \log n)$-time algorithm for computing a variant of the compact Voronoi diagram; this improves the previously claimed expected running time.
		\item An $O(n\log^2 n)$-time algorithm that finds a minimum consistent subset in two-colored point sets where one color class contains exactly one point; this improves the previous best known $O(n^2)$ running time which is due to Wilfong (SoCG 1991).
		\item An $O(n)$-time algorithm for the consistent subset problem on collinear points; this improves the previous best known $O(n^2)$ running time. 
		\item A non-trivial $O(n^6)$-time dynamic programming algorithm for the consistent subset problem on points arranged on two parallel lines.
	\end{enumerate}
	
	To obtain these results, we combine tools from planar separators, paraboloid lifting, additively-weighted Voronoi diagrams with respect to convex distance functions, point location in farthest-point Voronoi diagrams, range trees, minimum covering of a circle with arcs, and several geometric transformations.
	\end{abstract}
	
	\section{Introduction}
	One of the important problems in pattern recognition is to classify new objects according to the current objects using the nearest neighbor rule. Motivated by this problem, in 1968, Hart \cite{Hart1968} introduced the notion of {\em consistent subset} as follows. For a set $P$ of colored points\footnote{In some previous works the points have labels, as opposed to colors.} in the plane, a set $S\subseteq P$ is a consistent subset if for every point $p\in P\setminus S$, the closest point of $p$ in $S$ has the same color as $p$. The {\em consistent subset problem} asks for a consistent subset with minimum cardinality.
	Formally, we are given a set $P$ of $n$ points in the plane that is partitioned into ${P_1,\dots,P_k}$,
	with $k \geqslant 2$, and the goal is to find an smallest set $S\subseteq P$ such that for every $i\in\{1,\dots,k\}$ it holds that if $p\in P_i$ then the nearest neighbor of $p$ in $S$ belongs to $P_i$. It is implied by the definition that $S$ should contain at least one point from every $P_i$. 	
	To keep the terminology consistent with some recent works on this problem we will be dealing with colored points instead of partitions, that is, we assume that the points of $P_i$ are colored $i$. Following this terminology, the consistent subset problem asks for a smallest subset $S$ of $P$ such that the color of every point $p\in P\setminus S$ is the same as the color of its closest point in $S$. The notion of consistent subset has a close relation with Voronoi diagrams, a well-known structure in computational geometry. Consider the Voronoi diagram of a subset $S$ of $P$. Then, $S$ is a consistent subset of $P$ if and only if for every point $s\in S$ it holds that the points of $P$, that lie in the Voronoi cell of $s$, have the same color as $s$; see Figure~\ref{Voronoi-fig}(a).

	Since the initial presentation of this problem in 1968, there has not been significant progress from the algorithmic point of view. Although there were several attempts for developing algorithms, they either did not guarantee the optimality \cite{Gates1972, Hart1968, Wilfong1992} or had exponential running time \cite{Ritter1975}.
	In SoCG 1991, Wilfong \cite{Wilfong1992} proved that the consistent subset problem is NP-complete if the input points are colored by at least three colors---the proof is based on the NP completeness of the disc cover problem \cite{Masuyama1981}. 
	He further presented a technically-involved $O(n^2)$-time algorithm for a special case of two-colored input points where one point is red and all other points are blue; his elegant algorithm transforms the consistent subset problem to the problem of covering points with disks which in turn is transformed to the problem of covering a circle with arcs.  
	It has been recently proved, by Khodamoradi \etal \cite{Khodamoradi2018}, that the consistent subset problem with two colors is also NP-complete---the proof is by a reduction from the planar rectilinear monotone 3-SAT \cite{Berg2012}. Observe that the one color version of the problem is trivial because every single point is a consistent subset. More recently, Banerjee \etal \cite{Banerjee2018} showed that the consistent subset problem on collinear points, i.e., points that lie on a straight line, can be solved optimally in $O(n^2)$ time. 
	
	Recently, Gottlieb~\etal \cite{Gottlieb2018} studied a two-colored version of the consistent subset problem --- referred to as the nearest
	neighbor condensing problem --- where the points come from a metric space. They prove a lower bound for the hardness of approximating a minimum consistent subset; this lower bound includes two parameters: the doubling dimension of the space and the ratio of the minimum distance between points of opposite colors to the diameter of the point set. Moreover, for this two-colored version of the problem, they give an approximation algorithm whose ratio almost matches the lower bound. 
	
	In a related problem, which is called the selective subset problem, the goal is to find the smallest subset $S$ of $P$ such that for every $p\in P_i$ the nearest neighbor of $p$ in $S\cup (P\setminus P_i)$ belongs to $P_i$. Wilfong \cite{Wilfong1992} showed that this problem is also NP-complete even with two colors. See \cite{Banerjee2018} for some recent progress on this problem. 
	
	In this paper we study the consistent subset problem. We improve some previous results and present some new results. To obtain these results, we combine tools from planar separators, additively-weighted Voronoi diagrams with respect to a convex distance function, point location in farthest-point Voronoi diagrams, range trees, paraboloid lifting, minimum covering of a circle with arcs, and several geometric transformations. We present the first subexponential-time algorithm for this problem. We also present an $O(n\log n)$-time algorithm that finds a consistent subset of size two in two-colored point sets (if such a subset exists); this is obtained by transforming the consistent subset problem into a point-cone incidence problem in dimension three. 
	Towards our proof of this running time we present a deterministic $O(n \log n)$-time algorithm for computing a variant of the compact Voronoi diagram; this improves the $O(n \log n)$ expected running time of the randomized algorithm of Bhattacharya \etal \cite{Bhattacharya2010}. We also revisit the case where one point is red and all other points are blue; we give an $O(n\log^2 n)$-time algorithm for this case, thereby improving the previous $O(n^2)$ running time of \cite{Wilfong1992}.
	For collinear points, we present an $O(n)$-time algorithm; this improves the previous running time by a factor of $\Theta(n)$. We also present a non-trivial $O(n^6)$-time dynamic programming algorithm for points arranged on two parallel lines.  
	
	\section{A Subexponential Algorithm}
	
	The consistent subset problem can easily be solved in exponential time by simply checking all possible subsets of $P$. In this section we present the first subexponential-time algorithm for this problem. We consider the decision version of this problem in which we are given a set $P$ of $n$ colored points in the plane and an integer $k$, and we want to decide whether or not $P$ has a consistent subset of size $k$. Moreover, if the answer is positive, then we want to find such a subset. This problem can be solved in time $n^{O(k)}$ by checking all possible subsets of size $k$. We show how to solve this problem in time $n^{O(\sqrt{k})}$; we use a recursive separator-based technique that was introduced in 1993 by Hwang \etal \cite{Hwang1993} for the Euclidean $k$-center problem, and then extended by Marx and Pilipczuk \cite{Marx2015} for planar facility location problems. Although this technique is known before, its application in our setting is not straightforward and requires technical details which we give in this section.

	Consider an optimal solution $S$ of size $k$. The Voronoi diagram of $S$, say $\mathcal{V}$, is a partition of the plane into convex regions. We want to convert $\mathcal{V}$ to a 2-connected 3-regular planar graph that have a balanced curve separator. Then we want to use this separator to split the problem into two subproblems that can be solved independently. To that end, first we introduce small perturbation
	
	 \begin{wrapfigure}{r}{1.8in} 
		\centering
		\vspace{2pt} 
		\includegraphics[width=1.75in]{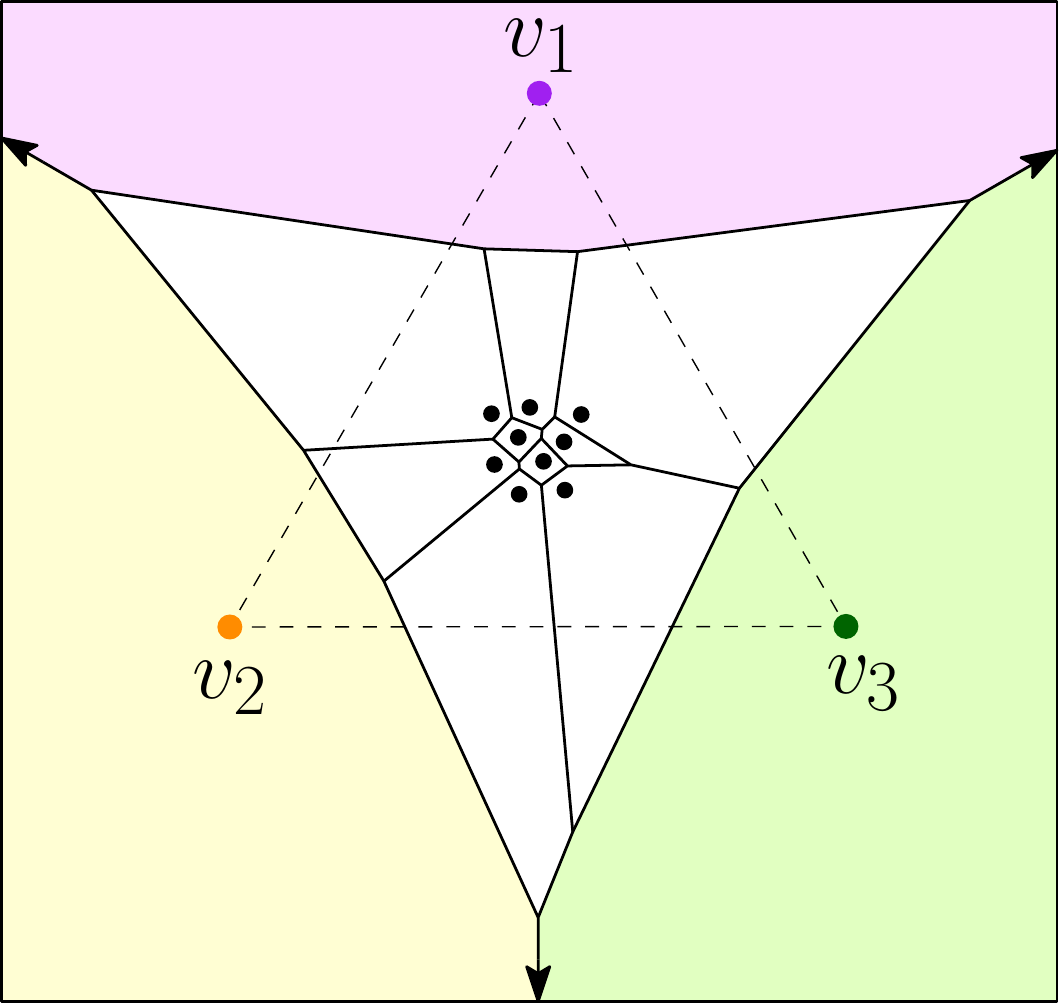} 
		\vspace{-8pt} 
	\end{wrapfigure}
	
	\noindent to the coordinates of points of $P$ to ensure that no four points lie on the boundary of a circle; this ensures that every vertex of $\mathcal{V}$ has degree 3. The Voronoi diagram $\mathcal{V}$ consists of finite segments and infinite rays. We want $\mathcal{V}$ to have at most three infinite rays. To achieve this, we introduce three new points $v_1, v_2,v_3$ that lie on the vertices of a sufficiently large equilateral triangle\footnote{The triangle is large in the sense that for every point $p\in P$, the closet point to $p$, among $P\cup\{v_1,v_2,v_3\}$, is in $P$.} that contains $P$, and then we color them by three new colors; see the right figure. Since these three points have distinct colors, they appear in any consistent subset of $P\cup\{v_1,v_2,v_3\}$. Moreover, since they are far from the original points, by adding them to any consistent subset of $P$ we obtain a valid consistent subset for $P\cup\{v_1,v_2,v_3\}$. Conversely, by removing these three points from any consistent subset of $P\cup\{v_1,v_2,v_3\}$ we obtain a valid consistent subset for $P$.
	Therefore, in the rest of our description we assume, without loss of generality, that $P$ contains $v_1,v_2,v_3$.
	Consequently, the optimal solution $S$ also contains those three points; this implies that $\mathcal{V}$ has three infinite rays which are introduced by $v_1, v_2,v_3$ (see the above figure). We introduce a new vertex at infinity and connect these three rays to that vertex. To this end we obtain a 2-connected 3-regular planar graph, namely $\mathcal{G}$. Marx and Pilipczuk \cite{Marx2015} showed that such a graph has a polygonal separator $\delta$ of size $O(\sqrt{k})$ (going through $O(\sqrt{k})$ faces and vertices) that is {\em face balanced}, in the sense that there are at most $2k/3$ faces of $\mathcal{G}$ strictly inside $\delta$ and at most $2k/3$ faces of $\mathcal{G}$ strictly outside $\delta$. The vertices of $\delta$ alternate between points of $S$ and the vertices of $\mathcal{G}$ as depicted in Figure~\ref{Voronoi-fig}(a). See \cite{Miller1986} for an alternate way of computing a balanced curve separator.

\begin{figure}[htb]
	\centering
	\setlength{\tabcolsep}{0in}
	$\begin{tabular}{cc}
	\multicolumn{1}{m{.5\columnwidth}}{\centering\includegraphics[width=.48\columnwidth]{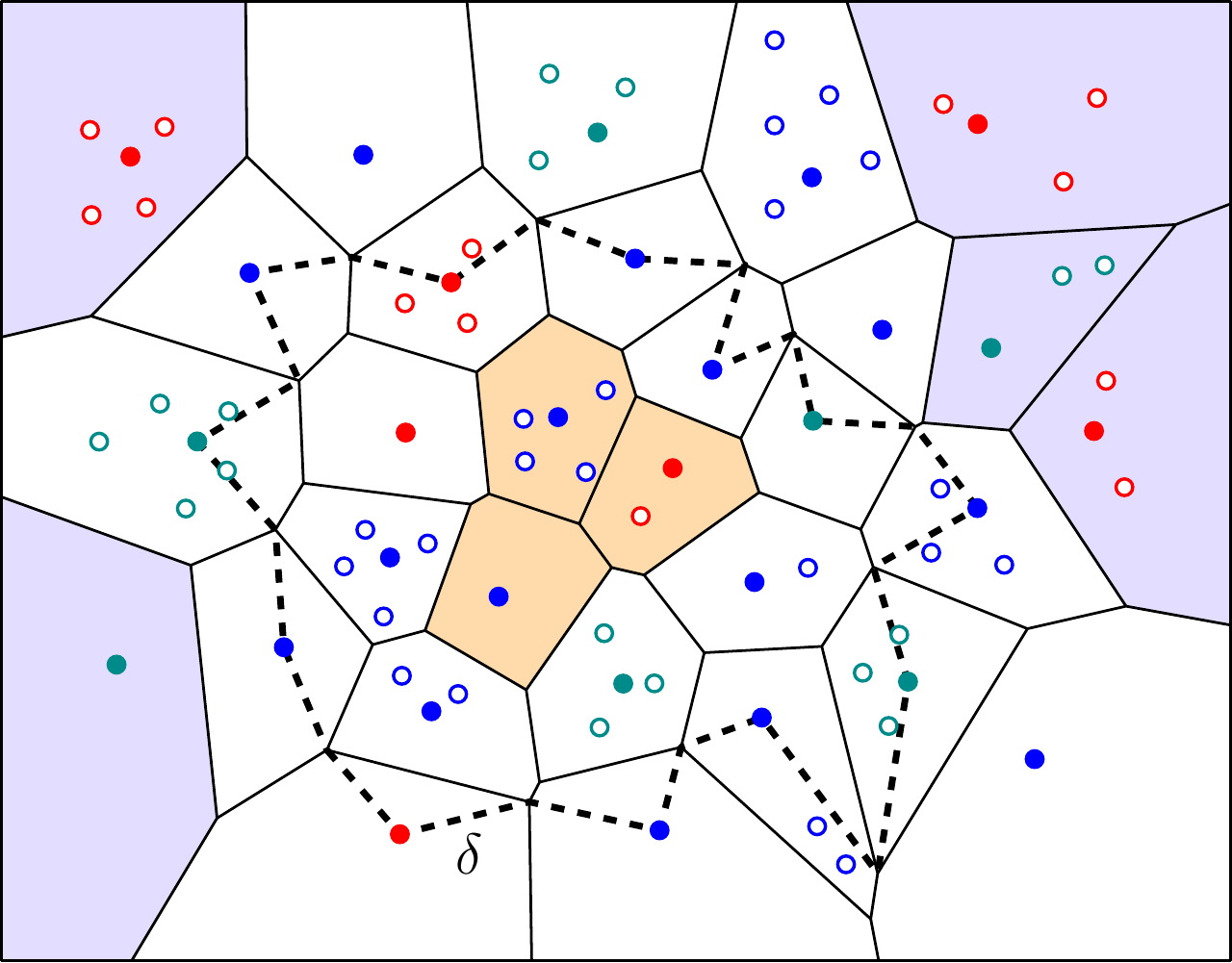}}
	&\multicolumn{1}{m{.5\columnwidth}}{\centering\includegraphics[width=.48\columnwidth]{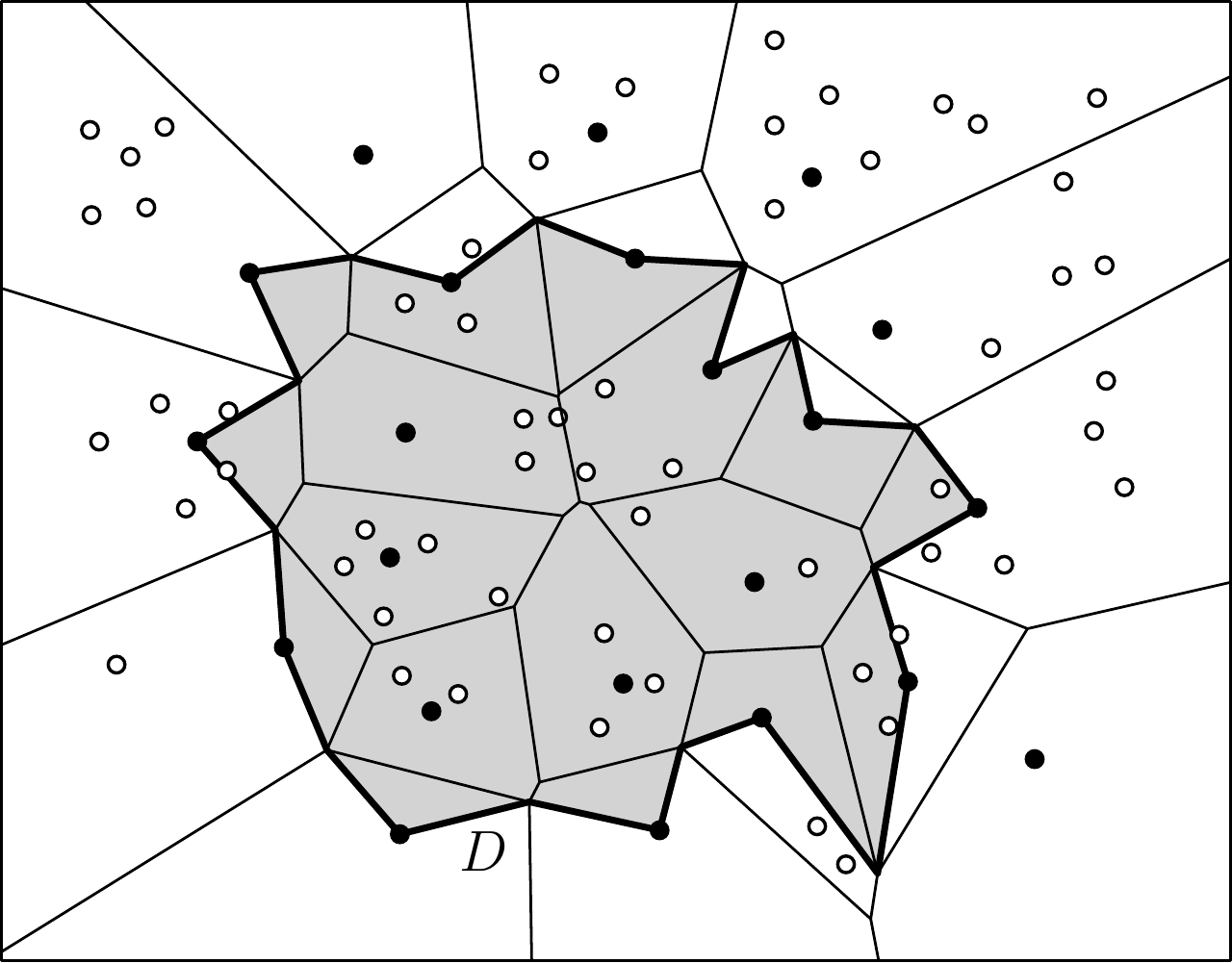}}\\
	(a)&(b)
	\end{tabular}$
	\caption{(a) A solution $S$ (bold points), together with its Voronoi diagram $\mathcal{V}$, and a balanced curve separator $\delta$. (b) A subproblem with input domain $D$ (shaded region) and a set $S'$ (bold points) that is part of the solution.}
	\label{Voronoi-fig}
\end{figure}

	We are going to use dynamic programming based on balanced curve separators of $\mathcal{G}$. The main idea is to use $\delta$ to split the problem into two smaller subproblems, one inside $\delta$ and one outside $\delta$, and then solve each subproblem recursively. But, we do not know $\mathcal{G}$ and hence we have no way of computing $\delta$. However, we can guess $\delta$ by trying all possible balanced curve separators of size $k'=O(\sqrt{k})$.

Every vertex of $\delta$ is either a point of $P$ or a vertex of $\mathcal{G}$ (and consequently a vertex of $\mathcal{V}$) that is introduced by three points of $P$. Therefore, every curve separator of size $k'$ is defined by at most $3k'$ points of $P$, and thus, the number of such separators is at most ${n\choose 3k'}\leqslant n^{3k'}=n^{O(\sqrt{k})}$. To find these curve separators, we try every subset of at most $3k'$ points of $P$. For every such subset we compute its Voronoi diagram, which has at most $6k'$ vertices. For the set that is the union of the $3k'$ points and the $6k'$ vertices, we check all $2^{(6k'+3k')}$ subsets and choose every subset that forms a balanced curve separator (that alternates between points and vertices). Therefore, in a time proportional to $n^{3k'}\cdot 2^{9k'}=n^{O(\sqrt{k})}$ we can compute all balanced curve separators. 
	
	By trying all balanced curve separators, we may assume that we have correctly guessed $\delta$ and the subset $S'$ of $P$, with $|S'|\leqslant 3k'$, that defines $\delta$. The solution of our main problem consists of $S'$ and the solutions of the two separate subproblems, one inside $\delta$ and one outside $\delta$. To solve these two subproblems recursively, in the later steps, we get subproblems of the following form. Throughout our description, we will assume that $P$ is fixed for all subproblems. The input of every subproblem consists of a positive integer $x$ $(\leqslant k)$, a subset $S'$ of $y$ $(\leqslant k)$ points of $P$ that are already chosen to be in the solution, and a polygonal domain $D$---possibly with holes---of size $\Theta(y)$ which is a polygon its vertices alternating between the points of $S'$ and the vertices of the Voronoi diagram of $S'$. The task is to select a subset $S\subseteq (P\cap D)\setminus S'$ of size $x$ such that:
	
	\begin{enumerate}[$(i)$]
		\item $D$ is a polygon where its vertices alternate between the points of $S'$ and the vertices of the Voronoi diagram of $S\cup S'$, and
		\item $S\cup S'$ is a consistent subset for $(P\cap D)\cup S'$.
	\end{enumerate}

	See Figure~\ref{Voronoi-fig}(b) for an illustration of such a subproblem.
	The top-level subproblem has $x=k$ and $y=0$. We stop the recursive calls as soon as we reach a subproblem with $x=O(\sqrt{k})$, in which case, we spend $O(n^x)$ time to solve this subproblem; this is done by trying all subsets of $(P\cap D)\setminus S'$ that have size $x$. For every subproblem, the number of points in $S'$ (i.e., $y$) is at most three times the number of vertices on the boundary of the domain $D$. The number of vertices on the boundary of $D$---that are accumulated during recursive calls---is at most
	 	
	$$\sqrt{k}+\sqrt{\frac{2}{3}k}+\sqrt{\left(\frac{2}{3}\right)^2 k}+\sqrt{\left(\frac{2}{3}\right)^3 k}+...=O(\sqrt{k}).$$
	
	Therefore, $y=|S'|=O(\sqrt{k})$, and thus the Voronoi diagram of $S\cup S'$ has a balanced curve separator of size $O(\sqrt{x+y})=O(\sqrt{k})$.\footnote{In fact the 2-connected 3-regular planar graph obtained from the Voronoi diagram of $S\cup S'$ has such a separator.} We try all possible $n^{O(\sqrt{k})}$ such separators, and for each of which we recursively solve the two subproblems in its interior and exterior. For these two subproblems to be really independent we include the $O(\sqrt{k})$ points, defining the separator, in the inputs of both subproblems. Therefore, the running time of our algorithm can be interpreted by the following recursion
	\[
	T(n,k) \leqslant n^{O(\sqrt{k})} 
	\cdot \max \bigl\{ T(n,k_1+y)+T(n,k_2+y) \mid k_1+k_2+y=k,~ k_1,k_2\leqslant 2k/3,~y=O(\sqrt{k}) \bigr\},
	\]
	which solves to $T(n,k) \leqslant n^{O(\sqrt{k})}$. Notice that our algorithm solves the decision version of the consistent subset problem for a fixed $k$.

	To compute the consistent subset of minimum cardinality,
	whose size, say $k$, is unknown at the start of the algorithm, 
	we apply the following standard technique:
	Start with a constant value $\kappa$, for example $\kappa = 1$.
	Run the decision algorithm with the value $\kappa$. If the
	answer is negative, then double the value of $\kappa$ and repeat this 
	process until the first time the decision algorithm gives a positive 
	answer.
	
	Consider the last value for $\kappa$. Note that $\kappa/2 < k \leqslant \kappa$.
	We perform a binary search for $k$ in the interval $[\kappa/2,\kappa]$.
	In this way, we find the value of $k$, as well as the consistent subset 
	of minimum cardinality, by running the decision algorithm $O(\log \kappa)$ 
	times. Thus, the total running time is 
	$n^{O(\sqrt{\kappa})} \cdot O(\log \kappa)$, which is $n^{O(\sqrt{k})}$. 
	We have proved the following theorem.
	
	\begin{theorem}
		A minimum consistent subset of $n$ colored points in the plane can be computed in $n^{O(\sqrt{k})}$ time, where $k$ is the size of the minimum consistent subset.
	\end{theorem}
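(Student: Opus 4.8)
The plan is to first solve the decision version of the problem — given $P$ and an integer $k$, decide whether $P$ admits a consistent subset of size $k$ and, if so, produce one — in time $n^{O(\sqrt{k})}$, and then reduce the optimization problem to it. For the decision version I would reason about the Voronoi diagram $\mathcal{V}$ of a hypothetical optimal solution $S$: by the characterization noted above, $S$ is consistent exactly when each Voronoi cell is monochromatic in the color of its generator, so the combinatorial structure of $\mathcal{V}$ encodes everything relevant. The first step is to massage $\mathcal{V}$ into a $2$-connected $3$-regular planar graph $\mathcal{G}$: a generic perturbation of the coordinates makes every Voronoi vertex have degree three, and adjoining three far-away sentinel points $v_1,v_2,v_3$ of three fresh colors forces exactly three unbounded rays, which I close off at a single vertex at infinity. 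As argued above, these sentinels lie in every consistent subset and do not change the answer for $P$, so I may assume $v_1,v_2,v_3\in P$.

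Next I would invoke the balanced-separator machinery of Marx and Pilipczuk \cite{Marx2015}, in the spirit of Hwang \etal \cite{Hwang1993}: $\mathcal{G}$ admits a face-balanced polygonal separator $\delta$ of size $O(\sqrt{k})$ whose vertices alternate between points of $S$ and vertices of $\mathcal{V}$, with at most $2k/3$ faces of $\mathcal{G}$ on each side. Since $\mathcal{G}$ itself is unknown, the key idea is to \emph{guess} $\delta$. Each of its $k'=O(\sqrt{k})$ vertices is either a point of $P$ or a Voronoi vertex determined by three points of $P$, so $\delta$ is specified by at most $3k'$ points of $P$; enumerating all $\binom{n}{3k'}$ such subsets, computing the $O(k')$-vertex Voronoi diagram of each, and testing all $2^{O(k')}$ subsets of the resulting vertex set yields every candidate separator in total time $n^{3k'}\cdot 2^{O(k')}=n^{O(\sqrt{k})}$.

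With $\delta$ and its defining set $S'$ in hand, I would split the instance along $\delta$ into an inside and an outside subproblem, each of the generic form described above: a domain $D$ (a polygon, possibly with holes) whose boundary alternates between points of $S'$ and Voronoi vertices, a partial solution $S'\subseteq P$, and a target size $x$; the task is to choose $x$ further points so that $S\cup S'$ is consistent for $(P\cap D)\cup S'$ and $D$'s boundary is compatible with the Voronoi diagram of $S\cup S'$. Recursing on both sides and including the $O(\sqrt{k})$ separator points in both inputs makes the two sides independent, which gives the recurrence
\[
T(n,k) \leqslant n^{O(\sqrt{k})}\cdot \max\bigl\{\, T(n,k_1+y)+T(n,k_2+y) \mid k_1+k_2+y=k,\ k_1,k_2\leqslant 2k/3,\ y=O(\sqrt{k}) \,\bigr\}.
\]
Because the balance guarantee shrinks the face budget by a factor $2/3$ per level, the separator points accumulated along any root-to-leaf path form the geometric series $\sqrt{k}+\sqrt{(2/3)k}+\sqrt{(2/3)^2 k}+\dots=O(\sqrt{k})$, so $|S'|$ stays $O(\sqrt{k})$ throughout and the recurrence solves to $T(n,k)\leqslant n^{O(\sqrt{k})}$; the recursion bottoms out once $x=O(\sqrt{k})$, where brute force over all size-$x$ subsets of $P\cap D$ costs $n^{O(\sqrt{k})}$.

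Finally, to handle an unknown optimum $k$, I would run the decision procedure for $\kappa=1,2,4,8,\dots$ until it first succeeds, then binary-search for $k$ in $[\kappa/2,\kappa]$; this multiplies the cost by only an $O(\log k)$ factor, absorbed into $n^{O(\sqrt{k})}$. I expect the main obstacle to be not the separator-size bound, which is inherited from \cite{Marx2015}, but the correctness of the decomposition: one must argue that cutting the plane along $\delta$, whose edges lie on Voronoi bisectors, preserves the nearest-neighbor relation within each region, so that a globally consistent subset corresponds precisely to a pair of locally consistent solutions of the two subproblems that agree on the shared separator set $S'$. Making the subproblem invariant — conditions $(i)$ and $(ii)$ — precise enough to support this equivalence, while keeping the domain description of size $\Theta(y)$, is where the technical care is required.
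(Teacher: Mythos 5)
Your proposal reproduces the paper's own proof essentially step for step: the same sentinel points $v_1,v_2,v_3$ and vertex at infinity yielding a $2$-connected $3$-regular planar graph, the same Marx--Pilipczuk face-balanced separator guessed by enumerating $\binom{n}{3k'}$ point subsets and $2^{O(k')}$ vertex subsets, the identical subproblem form with conditions $(i)$ and $(ii)$, the same geometric-series bound $\sqrt{k}+\sqrt{(2/3)k}+\cdots=O(\sqrt{k})$ on accumulated separator points, the same recurrence and base case at $x=O(\sqrt{k})$, and the same doubling-plus-binary-search reduction from optimization to decision. The decomposition correctness you flag as the delicate point is treated no more explicitly in the paper itself, so your argument matches the published one in both approach and level of detail.
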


\section{Consistent Subset of Size Two}	 
In this section we investigate the existence of a consistent subset of size two in a set of bichromatic points where every point is colored by one of the two colors, say red and blue. Before stating the problem formally we introduce some terminology. For a set $P$ of points in the plane, we denote the convex hull of $P$ by $\CH{P}$. For two points $p$ and $q$ in the plane, we denote the straight-line segment between $p$ and $q$ by $pq$, and the perpendicular bisector of $pq$ by $\bisector{p}{q}$.

Let $R$ and $B$ be two disjoint sets of total $n$ points in the plane such that the points of $R$ are colored red and the points of $B$ are colored blue. We want to decide whether or not $R\cup B$ has a
consistent subset of size two. Moreover, if the answer is positive, then we want to find such points, i.e., a red point $r\in R$ and a blue point $b\in B$ such that all red points are closer to $r$ than to $b$, and all blue points are closer to $b$ than to $r$. Alternatively, we want to find a pair of points $(r,b)\in R\times B$ such that $\bisector{r}{b}$ separates $\CH{R}$ and $\CH{B}$. This problem can be solved in $O(n^2\log n)$ time by trying all the $O(n^2)$ pairs $(r,b)\in R\times B$; for each pair $(r,b)$ we can verify, in $O(\log n)$ time, whether or not $\bisector{r}{b}$ separates $\CH{R}$ and $\CH{B}$. 
In this section we show how to solve this problem in time $O(n\log n)$. To that end, we assume that $\CH{R}$ and $\CH{B}$ are disjoint, because otherwise there is no such pair $(r,b)$.

\begin{wrapfigure}{r}{1.3in} 
	\centering
	\vspace{0pt} 
	\includegraphics[width=1.25in]{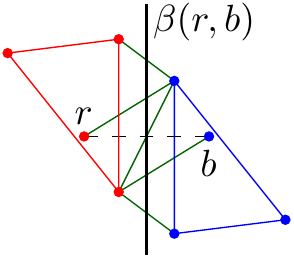} 
	\vspace{-8pt} 
\end{wrapfigure}
It might be tempting to believe that a solution of this problem contains points only from the boundaries of $\CH{R}$ and $\CH{B}$. However, this is not necessarily the case; in the figure to the right, the only solution of this problem contains $r$ and $b$ which are in the interiors of $\CH{R}$ and $\CH{B}$. Also, due to the close relation between Voronoi diagrams and Delaunay triangulations, one may believe that a solution is defined by the two endpoints of an edge in the Delaunay triangulation of $R\cup B$. This is not necessarily the case either; the green edges in the figure to the right, which are the Delaunay edges between $R$ and $B$, do not introduce any solution.

A {\em separating common tangent} of two disjoint convex polygons, $P_1$ and $P_2$, is a line $\ell$ that is tangent to both $P_1$ and $P_2$ such that $P_1$ and $P_2$ lie on different sides of $\ell$. Every two disjoint convex polygons have two separating common tangents; see Figure~\ref{tangents-fig}. Let $\ell_1$ and $\ell_2$ be the separating common tangents of $\CH{R}$ and $\CH{B}$. Let $R'$ and $B'$ be the subsets of $R$ and $B$ on the boundaries of $\CH{R}$ and $\CH{B}$, respectively, that are between $\ell_1$ and $\ell_2$ as depicted in Figure~\ref{tangents-fig}. For two points $p$ and $q$ in the plane, let $D(p,q)$ be the closed disk that is centered at $p$ and has $q$ on
its boundary. 

\begin{lemma}
	\label{inclusion-exclusion-lemma}
	For every two points $r\in R$ and $b\in B$, the bisector $\bisector{r}{b}$ separates $R$ and $B$ if and only if
	\begin{enumerate}[$(i)$]
		\item $\forall r'\in R':~~~ b\notin D(r',r)$, and
		\item $\forall b'\in B':~~~ b\in D(b',r)$.
	\end{enumerate}
\end{lemma}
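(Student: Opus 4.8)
The plan is to reduce the three conditions to the sign of a single affine function. Fix $r\in R$ and $b\in B$ and set $f(x)=\|x-b\|^2-\|x-r\|^2$. Expanding the squares gives $f(x)=2\langle x,\,r-b\rangle+\bigl(\|b\|^2-\|r\|^2\bigr)$, so $f$ is affine in $x$, its zero set is exactly $\bisector{r}{b}$, and $f(x)>0$ means $x$ is strictly closer to $r$ while $f(x)\le 0$ means $x$ is at least as close to $b$. Under this reading ``$\bisector{r}{b}$ separates $R$ and $B$'' is precisely the pair of conditions $f>0$ on $R$ and $f\le 0$ on $B$ (equivalently, under a general-position/tie-breaking convention, strictly closer on each side). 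The two disk memberships translate identically: $b\notin D(r',r)$ says $\|r'-b\|>\|r'-r\|$, i.e.\ $f(r')>0$, so condition $(i)$ is exactly $f>0$ on $R'$; and $b\in D(b',r)$ says $\|b'-b\|\le\|b'-r\|$, i.e.\ $f(b')\le 0$, so $(ii)$ is exactly $f\le 0$ on $B'$.

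With this dictionary the forward implication is immediate: since $R'\subseteq R$ and $B'\subseteq B$, the separation conditions $f>0$ on $R$ and $f\le 0$ on $B$ restrict to $(i)$ and $(ii)$. For the converse I would invoke convexity. As $f$ is affine and $R\subseteq\CH{R}$, $B\subseteq\CH{B}$, it suffices to bound $f$ on the two hulls, and an affine function attains its extrema over a convex polygon at a vertex. Since $\nabla f=2(r-b)$, the function $f$ decreases fastest in the direction $b-r$, so $\min_{\CH{R}}f$ is attained at the vertex of $\CH{R}$ extreme in direction $b-r$, and, symmetrically, $\max_{\CH{B}}f$ at the vertex of $\CH{B}$ extreme in direction $r-b$. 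If these two extreme vertices lie on the facing chains $R'$ and $B'$, then $(i)$ forces $\min_{\CH{R}}f>0$ and $(ii)$ forces $\max_{\CH{B}}f\le 0$, so $f>0$ on all of $R$ and $f\le 0$ on all of $B$, which is the separation we want.

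The crux, and the step I expect to be the main obstacle, is this localization: the vertex of $\CH{R}$ farthest in the direction $d=b-r$ lies on $R'$, and symmetrically for $B'$. The intuition is that $d$ points ``across the corridor'' from the red hull toward the blue hull, so the supporting vertex of $\CH{R}$ with outward normal $d$ must sit on the chain that faces $\CH{B}$. To make this rigorous I would use the separating common tangents $\ell_1,\ell_2$. Let $r_1,r_2$ be the points where they touch $\CH{R}$ (the endpoints of $R'$) and let $n_1,n_2$ be the outward normals pointing from $\CH{R}$ toward $\CH{B}$. Because each $\ell_i$ has $\CH{R}$ and $\CH{B}$ strictly on opposite sides, and because $r\in\CH{R}$ and $b\in\CH{B}$, one obtains $\langle d,n_1\rangle\ge 0$ and $\langle d,n_2\rangle\ge 0$, reflecting that $d$ is a direction along which the two hulls are weakly separated. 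The separating directions form a cone governed by the two tangents, and I would show it is contained in the union of the outward-normal cones of the vertices and edges of $R'$ --- equivalently, that the supporting line of $\CH{R}$ with normal $d$ touches the boundary chain lying between $r_1$ and $r_2$. The subtle point is that the crude sign inequalities alone need not pin $d$ inside the edge-normal range $[n_1,n_2]$; it is the vertex normal cones at the tangent touch points $r_1,r_2$ (which extend beyond $n_1$ and $n_2$) that absorb $d$ and make the containment go through. This direction-cone containment, argued from the tangency of $\ell_1,\ell_2$ to \emph{both} hulls, is the technical heart of the proof, and the identical argument applied to $\CH{B}$ with direction $r-b$ disposes of $B'$.
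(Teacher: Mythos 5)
Your dictionary via the affine function $f$ and the forward implication are correct and match the paper in substance. The genuine gap is exactly the step you flagged as the crux, and it is fatal as stated: the localization claim ``the vertex of $\CH{R}$ extreme in direction $d=b-r$ lies on $R'$'' is false, and the sign inequalities $\langle d,n_1\rangle\geqslant 0$, $\langle d,n_2\rangle\geqslant 0$ cannot rescue it, because the normal cones at the touch points $r_1,r_2$ can be arbitrarily thin. Concretely, let $R$ and $B$ be fine polygonal approximations of unit circles centered at $(0,0)$ and $(2+\delta,0)$ for small $\delta>0$. The separating common tangents pass near $(1+\delta/2,0)$ and touch $\CH{R}$ at angles $\pm\theta_0$ with $\theta_0=\arccos\bigl(1/(1+\delta/2)\bigr)=O(\sqrt{\delta})$, so $R'$ is a tiny arc around $(1,0)$ and the union of the normal cones over all of $R'$ has angular width $O(\sqrt{\delta})$. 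Meanwhile the wedge $\{d:\langle d,n_1\rangle\geqslant 0,\ \langle d,n_2\rangle\geqslant 0\}$ has width $180^\circ-O(\sqrt{\delta})$, since $n_1,n_2$ sit at angles $\pm\theta_0$. Taking $r=(0,1)$ (top of the red circle) and $b=(2+\delta,-1)$ (bottom of the blue one) gives $d\approx(2,-2)$, which satisfies both inequalities, yet the extreme vertex of $\CH{R}$ in direction $d$ lies near angle $-45^\circ$, far outside $R'$. So there is no containment of the separating-direction cone in the chain's normal cones, and your converse argument collapses at its load-bearing step.

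The reason the lemma nevertheless survives, and what the paper does differently, is instructive: the paper argues by contraposition that if $\bisector{r}{b}$ fails to separate, then some red point lies on $b$'s side of the bisector, and then some point of $R'$ does too, contradicting $(i)$ (symmetrically for blue). That is a statement about one specific halfplane --- the one bounded by the bisector of $r$ and $b$, whose offset is pinned down by $r\in\CH{R}$ and $b\in\CH{B}$ --- not about an arbitrary level set of $f$ in direction $d$. By passing to the minimizer of $f$ over $\CH{R}$ you strengthened ``the bisector halfplane meets $R'$ whenever it meets $R$'' to ``$\min_{\CH{R}}f$ is attained on $R'$,'' and that strengthening is false: in the example above, the chain vertex near angle $-\theta_0$ is still closer to $b$ than to $r$ (so $(i)$ fails for this pair, as the lemma requires), even though the off-chain vertex at $-45^\circ$ is where $f$ is minimized. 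Any repair must keep the threshold and use where the bisector sits relative to both hulls, information your extremal-vertex reduction discards. It is worth noting that the paper itself asserts the chain-witness step (``then there is also a point $r'\in R'$ to the right side'') without a detailed proof, so supplying a careful argument for that threshold-specific claim is the true technical heart of the lemma --- but it is a different, weaker claim than the one your proposal relies on.
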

\begin{proof}
	For the direct implication since $\bisector{r}{b}$ separates $R$ and $B$, every red point $r'$ (and in particular every point in $R'$) is closer to $r$ than to $b$; this implies that $D(r',r)$ does not contain $b$ and thus (i) holds. Also, every blue point $b'$ (and in particular every point in $B'$) is closer to $b$ than to $r$; this implies that $D(b',r)$ contains $b$ and thus (ii) holds. See Figure~\ref{tangents-fig}.
	
	Now we prove the converse implication by contradiction. Assume that  both (i) and (ii) hold for some $r\in R$ and some $b\in B$, but the bisector $\bisector{r}{b}$ does not separate $R$ and $B$. After a suitable rotation we may assume that $\bisector{r}{b}$ is vertical, $r$ is to the left side of $\bisector{r}{b}$ and $b$ is to the right side of $\bisector{r}{b}$. Since $\bisector{r}{b}$ does not separate $R$ and $B$, there exists either a point of $R$ to the right side of $\bisector{r}{b}$, or a point of $B$ to the left side of $\bisector{r}{b}$. If there is a point of $R$ to the right side of $\bisector{r}{b}$ then there is also a point $r'\in R'$ to the right side of $\bisector{r}{b}$. In this case $r'$ is closer to $b$ than to $r$, and thus the disk $D(r',r)$ contains $b$ which contradicts (i). If there is a point of $B$ to the left side of $\bisector{r}{b}$ then there is also a point $b'\in B'$ to the left side of $\bisector{r}{b}$. In this case $b'$ is closer to $r$ than to $b$ and thus the disk $D(b',r)$ does not contain $b$ which contradicts (ii).  
\end{proof}	

\begin{figure}[htb]
	\centering
	\includegraphics[width=.5\columnwidth]{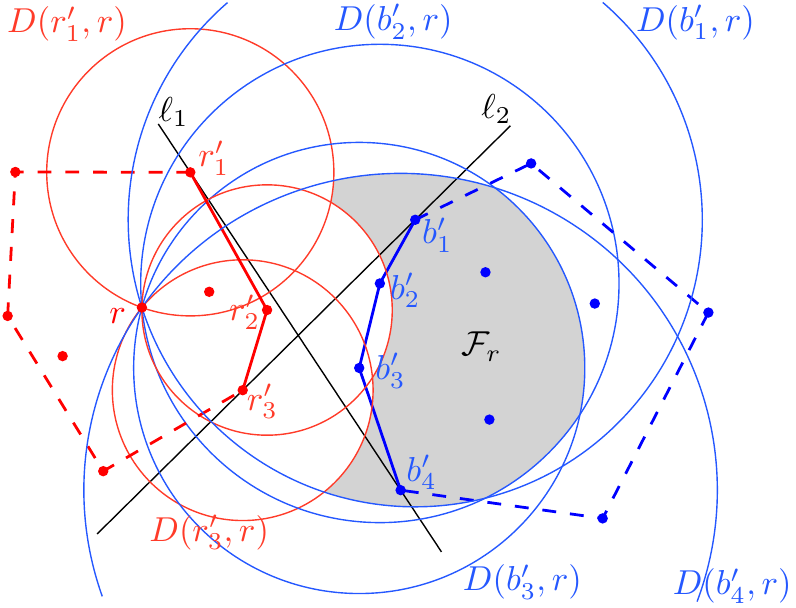}
	\caption{The lines $\ell_1$ and $\ell_2$ are the separating common tangents of $\CH{R}$ and $\CH{B}$. $R'=\{r'_1,r'_2,r'_3\}$ and $B'=\{b'_1,b'_2,b'_3,b'_4\}$ are the subsets of $R$ and $B$ on boundaries of $\CH{R}$ and $\CH{B}$ that lie between $\ell_1$ and $\ell_2$. The feasible region $F_r$ for point $r$ is shaded.}
	\label{tangents-fig}
\end{figure}

Lemma~\ref{inclusion-exclusion-lemma} implies that for a pair $(r,b)\in R\times B$ to be a consistent subset of $R\cup B$ it is necessary and sufficient that every point of $R'$ is closer to $r$ than to $b$, and every point of $B'$ is closer to $b$ than to $r$. This lemma does not imply that $r$ and $b$ are necessarily in $R'$ and $B'$. 	
Observe that Lemma~\ref{inclusion-exclusion-lemma} holds even if we swap the roles of $r, r', R'$ with $b,b',B'$ in (i) and (ii). Also, observe that this lemma holds even if we take $R'$ and $B'$ as all red and blue points on boundaries of $\CH{R}$ and $\CH{B}$. 

For every red point $r\in R$ we define a {\em feasible region} $\mathcal{F}_r$ as follow
\[
\mathcal{F}_r ~=~ \left(\bigcap_{b'\in B'} D(b',r)\right)
\setminus\left( \bigcup_{r'\in R'} D(r',r) \right).
\]
See Figure~\ref{tangents-fig} for illustration of a feasible region. Lemma~\ref{inclusion-exclusion-lemma}, together with this definition, imply the following corollary.
\begin{corollary}
	\label{feasible-cor}
	For every two points $r\in R$ and $b\in B$, the bisector $\bisector{r}{b}$ separates $R$ and $B$ if and only if $b\in \mathcal{F}_r$.
\end{corollary}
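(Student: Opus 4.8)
The plan is to prove this corollary as a direct unpacking of the definition of the feasible region $\mathcal{F}_r$, matching its two set-theoretic components to the two conditions of Lemma~\ref{inclusion-exclusion-lemma}. Since the corollary is stated as an immediate consequence of that lemma, no new geometric argument is needed; the entire content is translating the quantifiers hidden inside an intersection and a set-difference into the pointwise conditions $(i)$ and $(ii)$.

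First I would decompose the membership statement $b\in\mathcal{F}_r$. By the definition
\[
\mathcal{F}_r ~=~ \left(\bigcap_{b'\in B'} D(b',r)\right)\setminus\left(\bigcup_{r'\in R'} D(r',r)\right),
\]
the point $b$ lies in $\mathcal{F}_r$ exactly when it lies in the intersection on the left and outside the union on the right. Lying in $\bigcap_{b'\in B'}D(b',r)$ means $b\in D(b',r)$ for every $b'\in B'$, which is verbatim condition $(ii)$. Lying outside $\bigcup_{r'\in R'}D(r',r)$ means $b\notin D(r',r)$ for every $r'\in R'$, which is verbatim condition $(i)$.

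Then I would conclude by chaining the two equivalences: $b\in\mathcal{F}_r$ holds if and only if both $(i)$ and $(ii)$ hold, and, by Lemma~\ref{inclusion-exclusion-lemma}, the conjunction of $(i)$ and $(ii)$ holds if and only if $\bisector{r}{b}$ separates $R$ and $B$. Composing these gives the claimed equivalence. The only point requiring any care — and it is a very mild one — is the correct translation of the set operations into quantifiers, namely that membership in an intersection is a universal condition over $B'$ and that membership in the complement of a union is the universal negation of membership in each disk over $R'$; beyond this bookkeeping there is no genuine obstacle.
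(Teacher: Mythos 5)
Your proposal is correct and matches the paper exactly: the paper offers no separate argument, stating only that Lemma~\ref{inclusion-exclusion-lemma} together with the definition of $\mathcal{F}_r$ implies the corollary, which is precisely your unpacking of the intersection and set-difference into conditions $(i)$ and $(ii)$.
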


Based on this corollary, our original decision problem reduces to the following question.
\begin{question}
	\label{q1}
	Is there a blue point $b\in B$ such that $b$ lies in the feasible region $\mathcal{F}_r$ of some red point $r\in R$?
\end{question}
If the answer to Question~\ref{q1} is positive then $\{r,b\}$ is a consistent subset for $R\cup B$, and if the answer is negative then $R\cup B$ does not have a consistent subset with two points.	
In the rest of this section we show how to answer Question~\ref{q1}. To that end, we lift the plane onto the paraboloid $z=x^2+y^2$ by projecting every point $s=(x,y)$ in $\RR^2$ onto the point $\hat s=(x,y, x^2+y^2)$ in $\RR^3$. This lift projects a circle in $\RR^2$ onto a plane in $\RR^3$. Consider a disk $D(p,q)$ in $\RR^2$ and let $\pi(p,q)$ be the plane in $\RR^3$ that contains the projection of the boundary circle of $D(p,q)$. Let $H^-(p,q)$ be the lower closed halfspace defined by $\pi(p,q)$, and let $H^+(p,q)$ be the upper open halfspace defined by $\pi(p,q)$. For every point $s\in \RR^2$, its projection $\hat s$ lies in $H^-(p,q)$ if and only if $s\in D(p,q)$, and lies in $H^+(p,q)$ otherwise. Moreover, $\hat s$ lies in $\pi(p,q)$ if and only if $s$ is on the boundary circle of $D(p,q)$. For every point $r\in R$ we define a polytope $\mathcal{C}_r$ in $\RR^3$ as follow
\[
\mathcal{C}_r ~=~ \left(\bigcap_{b'\in B'} H^-(b',r)\right)
\cap \left( \bigcap_{r'\in R'} H^+(r',r) \right).
\]

Based on the above discussion, Corollary~\ref{feasible-cor} can be translated to the following corollary.

\begin{corollary}
	\label{polytope-cor}
	For every two points $r\in R$ and $b\in B$, the bisector $\bisector{r}{b}$ separates $R$ and $B$ if and only if $\hat b\in \mathcal{C}_r$.
\end{corollary}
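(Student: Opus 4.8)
The plan is to derive Corollary~\ref{polytope-cor} directly from Corollary~\ref{feasible-cor} by translating the planar membership condition $b\in\mathcal{F}_r$ into the three-dimensional statement $\hat b\in\mathcal{C}_r$ one disk at a time, using the lifting map described above. First I would unfold the definition of $\mathcal{F}_r$: the condition $b\in\mathcal{F}_r$ is equivalent to the conjunction of $(a)$ $b\in D(b',r)$ for every $b'\in B'$, and $(b)$ $b\notin D(r',r)$ for every $r'\in R'$, since the intersection over $B'$ forces membership in each disk while removing the union over $R'$ forces non-membership in each of those disks.

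Next I would apply the stated property of the lift to each clause separately. For clause $(a)$, the equivalence $\hat s\in H^-(p,q)\iff s\in D(p,q)$ gives $b\in D(b',r)\iff \hat b\in H^-(b',r)$, so $(a)$ becomes $\hat b\in\bigcap_{b'\in B'} H^-(b',r)$. For clause $(b)$, I would use the complementary half of the same property, namely that $\hat s$ lies in the open upper halfspace $H^+(p,q)$ exactly when $s\notin D(p,q)$; hence $b\notin D(r',r)\iff \hat b\in H^+(r',r)$, so $(b)$ becomes $\hat b\in\bigcap_{r'\in R'} H^+(r',r)$. Conjoining the two transformed clauses is precisely the definition of $\mathcal{C}_r$, which yields $b\in\mathcal{F}_r\iff \hat b\in\mathcal{C}_r$. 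Combined with Corollary~\ref{feasible-cor}, this proves the claim.

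The only point that needs care --- and the one I would check explicitly --- is the matching between the closed/open distinction and the disks: $D(p,q)$ is a closed disk, so membership corresponds to the closed halfspace $H^-$, while the strict non-membership coming from the set difference in $\mathcal{F}_r$ must correspond to the open halfspace $H^+$. The boundary case is exactly the image under the lift of points on the boundary circle of $D(p,q)$, which map onto the plane $\pi(p,q)$ and hence belong to $H^-$ but not to $H^+$; this is why $\mathcal{C}_r$ is defined using $H^-$ for the $B'$-disks and $H^+$ for the $R'$-disks, so that no such boundary point is miscounted. Beyond this bookkeeping the argument is a routine term-by-term translation, and I do not anticipate any real obstacle.
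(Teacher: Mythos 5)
Your proposal is correct and matches the paper's argument: the paper derives Corollary~\ref{polytope-cor} from Corollary~\ref{feasible-cor} via exactly this term-by-term translation under the paraboloid lift, using the stated equivalences $s\in D(p,q)\iff \hat s\in H^-(p,q)$ and $s\notin D(p,q)\iff \hat s\in H^+(p,q)$ to turn the intersection-minus-union defining $\mathcal{F}_r$ into the halfspace intersection defining $\mathcal{C}_r$. Your explicit bookkeeping of the closed/open boundary case is a point the paper leaves implicit, and it is handled correctly.
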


This corollary, in turn, translates Question~\ref{q1} to the following question.
\begin{question}
	\label{q2}
	Is there a blue point $b\in B$ such that its projection $\hat b$ lies in the polytope $\mathcal{C}_r$ for some red point $r\in R$?
\end{question}

Now, we are going to answer Question~\ref{q2}. The polytope $\mathcal{C}_r$ is the intersection of some halfspaces, each of which has $\hat r$ on its boundary plane. Therefore, $\mathcal{C}_r$ is a cone in $\RR^3$ with apex $\hat r$; see Figure~\ref{cones-fig}. Recall that $|R\cup B|=n$, however, for the purposes of worst-case running-time analysis and to simplify indexing, we will index the red points, and also the blue points, from 1 to $n$. Let $r_1,r_2,,\dots, r_n$ be the points of $R$. For every point $r_i\in R$, let $\tau_i$ be the translation that brings $\hat r_1$ to $\hat r_i$. Notice that $\tau_1$ is the identity transformation. In the rest of this section we will write $\mathcal{C}_{i}$ for $\mathcal{C}_{r_i}$.

\begin{lemma}
	\label{translation-lemma}
	For every point $r_i\in R$, the cone $\mathcal{C}_{i}$ is the translation of $\mathcal{C}_{1}$ with respect to $\tau_i$. 
\end{lemma}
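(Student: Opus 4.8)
The plan is to show that, up to the translation $\tau_i$, the cone $\mathcal{C}_i$ does not depend on $i$ at all: I will exhibit a single fixed cone $K$ with apex at the origin such that $\mathcal{C}_i = \hat r_i + K$ for every index $i$. Since $\tau_i$ is the translation by the vector $\hat r_i - \hat r_1$, this identity immediately yields $\mathcal{C}_i = \hat r_i + K = (\hat r_1 + K) + (\hat r_i - \hat r_1) = \tau_i(\mathcal{C}_1)$, which is exactly the claim.

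First I would compute the equation of the bounding plane $\pi(p,q)$ of a disk $D(p,q)$ under the paraboloid lift. Writing $p = (p_1, p_2)$ and $q = (q_1, q_2)$ and substituting $z = x^2 + y^2$ into the circle equation $(x - p_1)^2 + (y - p_2)^2 = |pq|^2$ yields, after routine simplification, $z = 2 p_1 x + 2 p_2 y + (q_1^2 + q_2^2 - 2 p_1 q_1 - 2 p_2 q_2)$. The key observation --- and the real content of the lemma --- is that the coefficients of $x$ and $y$, and hence the normal direction $(2 p_1, 2 p_2, -1)$ of $\pi(p,q)$, depend only on the center $p$ and are completely independent of the boundary point $q$.

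Next I would apply this to the two families of planes defining $\mathcal{C}_r$. In the intersections $\bigcap_{b' \in B'} H^-(b', r)$ and $\bigcap_{r' \in R'} H^+(r', r)$ the centers are the fixed points of $B'$ and $R'$, while $r$ plays the role of the boundary point; hence every facet normal of $\mathcal{C}_r$ is fixed independently of $r$, and only the offsets of the planes vary with $r$. Introducing for each point $p$ the fixed linear functional $f_p(x, y, z) = z - 2 p_1 x - 2 p_2 y$, the plane computation gives $H^-(p, r) = \{Q : f_p(Q) \le f_p(\hat r)\}$ and $H^+(p, r) = \{Q : f_p(Q) > f_p(\hat r)\}$, where I have used that $\pi(p, r)$ passes through $\hat r$ (because $r$ lies on the boundary of $D(p,r)$, as recorded earlier in the excerpt) to identify the offset with $f_p(\hat r)$. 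By linearity of $f_p$, these halfspaces are precisely $\hat r$ plus the halfspaces $\{f_p \le 0\}$ and $\{f_p > 0\}$ through the origin.

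Taking the intersection over the fixed sets $B'$ and $R'$ then gives $\mathcal{C}_r = \hat r + K$ with $K = \left(\bigcap_{b' \in B'} \{f_{b'} \le 0\}\right) \cap \left(\bigcap_{r' \in R'} \{f_{r'} > 0\}\right)$, a cone with apex at the origin that does not depend on $r$. Substituting $r = r_i$ and $r = r_1$ finishes the argument. The only point requiring genuine care is the bookkeeping that the index sets $R'$ and $B'$ are identical across all the cones $\mathcal{C}_i$ --- which holds because $R'$ and $B'$ were fixed once and for all --- so that the same functionals, and therefore the same cone $K$, appear in every $\mathcal{C}_i$. The plane computation itself is routine, and it is the normal-independence together with the shared apex $\hat r$ that makes the translation work.
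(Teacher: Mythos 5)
Your proof is correct and is essentially the paper's argument in coordinates: your computation that the lifted plane of $D(p,q)$ has normal $(2p_1,2p_2,-1)$ depending only on the center $p$ is exactly the algebraic form of the paper's observation that concentric circles lift to parallel planes, and both proofs then translate each defining halfspace using the fact that $\hat r$ lies on every bounding plane of $\mathcal{C}_r$. Your packaging via the linear functionals $f_p$ and the fixed apex-at-origin cone $K$ with $\mathcal{C}_r = \hat r + K$ is a slightly more explicit (and equally valid) way of stating the same conclusion.
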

\begin{wrapfigure}{r}{2.2in} 
	\vspace{0pt} 
	\centering
	\includegraphics[width=2.1in]{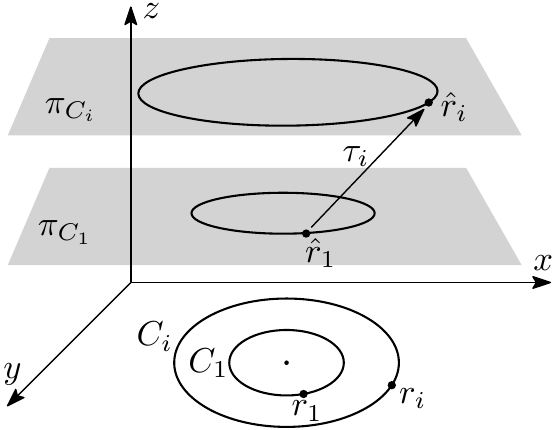} 
	\vspace{-5pt} 
\end{wrapfigure}
\noindent{\em Proof.} 
For a circle $C$ in $\RR^2$, let $\pi_C$ denote the plane in $\RR^3$ that $C$ translates onto.
For every two concentric circles $C_1$ and $C_i$ in $\RR^2$ it holds that $\pi_{C_1}$ and $\pi_{C_i}$ are parallel; see the figure to the right. It follows that, if $C_1$ passes through the point $r_1$, and $C_i$ passes through the point $r_i$, then $\pi_{C_i}$ is obtained from $\pi_{C_1}$ by the translation $\tau_i$ that brings $\hat r_1$ to $\hat r_i$, that is $\tau_i(\pi_{C_1})=\pi_{C_i}$. A similar argument holds also for the halfspaces defined by $\pi_{C_1}$ and $\pi_{C_i}$. Since for every $a\in R'\cup B'$ the disks $D(a,r_1)$ and $D(a,r_i)$ are concentric and the boundary of $D(a,r_1)$ passes through $r_1$ and the boundary of $D(a,r_i)$ passes through $r_i$, it follows that $\tau_i(H^+(a,r_1))=H^+(a,r_i)$ and $\tau_i(H^-(a,r_1))=H^-(a,r_i)$. Since a translation of a polytope is obtained by translating each of the halfspaces defining it, we have $\tau_i(\mathcal{C}_{1})= \mathcal{C}_{i}$ as depicted in Figure~\ref{cones-fig}.
\qed 

\vspace{10pt}

\begin{figure}[H]
	\centering
	\vspace{5pt}
	\includegraphics[width=.7\columnwidth]{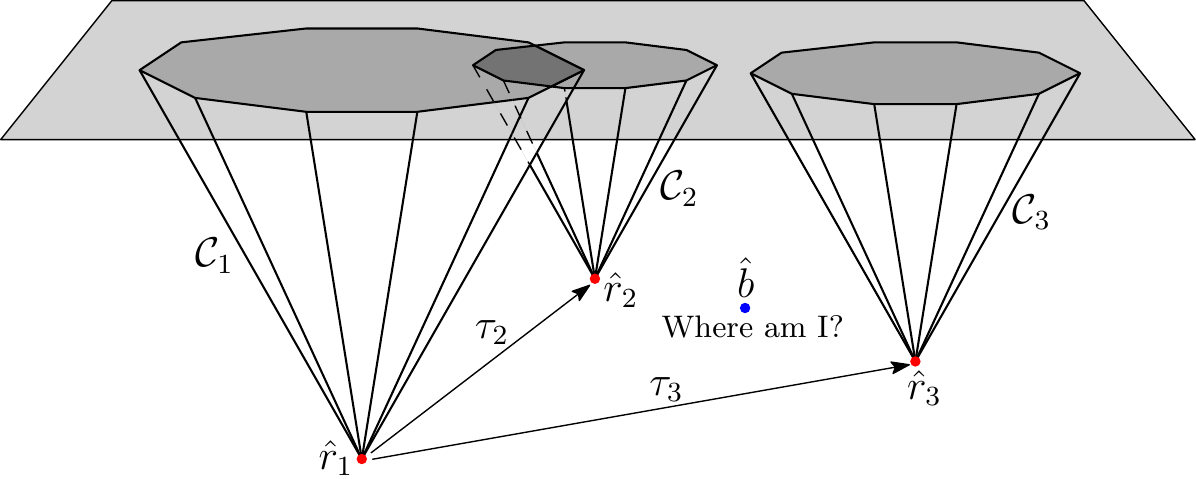}
	\caption{The cones $\mathcal{C}_{2}$ and $\mathcal{C}_{3}$ are the translations of $\mathcal{C}_{1}$ with respect to $\tau_2$ and $\tau_3$.}
	\label{cones-fig}
\end{figure}

It follows from Lemma~\ref{translation-lemma} that to answer Question~\ref{q2} it suffices to solve the following problem: Given a cone $\mathcal{C}_{1}$ defined by $n$ halfspaces, $n$ translations of $\mathcal{C}_{1}$, and set of $n$ points, we want to decide whether or not there is a point in some cone (see Figure~\ref{cones-fig}). This can be verified in $O(n\log n)$ time, using Theorem~\ref{point-cone-thr} that we will prove later in Section~\ref{point-cone-section}. This is the end of our constructive proof. The following theorem summarizes our result in this section.

\begin{theorem}
	Given a set of $n$ bichromatic points in the plane, in $O(n \log n)$ time, we can compute a consistent subset of size two $($if such a set exists$)$. 
\end{theorem}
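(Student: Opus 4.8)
The plan is to assemble the chain of reductions established above into a single $O(n\log n)$-time pipeline, the only genuinely non-trivial ingredient being the point-cone incidence routine of Theorem~\ref{point-cone-thr}. First I would dispose of the trivial case: compute $\CH{R}$ and $\CH{B}$ in $O(n\log n)$ time and test whether they are disjoint; if they intersect then no bisector $\bisector{r}{b}$ can separate $R$ and $B$, so we report that no consistent subset of size two exists. Assuming disjointness, I would compute the two separating common tangents $\ell_1,\ell_2$ of the two hulls and extract the boundary subsets $R'$ and $B'$ lying between them; given the hulls this takes $O(\log n)$ time per tangent and $O(n)$ time to collect $R'$ and $B'$, so all of the preprocessing stays within the budget.

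Next I would invoke the correctness of the reduction rather than re-prove it. By Corollary~\ref{feasible-cor}, a pair $(r,b)\in R\times B$ is a consistent subset exactly when $b\in\mathcal{F}_r$, so the decision problem is precisely Question~\ref{q1}. Lifting to the paraboloid $z=x^2+y^2$ and applying Corollary~\ref{polytope-cor} turns this into Question~\ref{q2}: does some lifted blue point $\hat b$ lie in some cone $\mathcal{C}_r$? Here I would use Lemma~\ref{translation-lemma}, which guarantees that every cone $\mathcal{C}_i$ is the translate $\tau_i(\mathcal{C}_1)$ of one fixed cone $\mathcal{C}_1$. Hence the instance is exactly the point-cone incidence problem solved by Theorem~\ref{point-cone-thr}: one cone $\mathcal{C}_1$ given by its $O(n)$ defining halfspaces, the $n$ translations $\tau_1,\dots,\tau_n$ (each $\tau_i$ moving $\hat r_1$ to $\hat r_i$), and the $n$ lifted blue points.

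Finally I would run the algorithm of Theorem~\ref{point-cone-thr} in $O(n\log n)$ time. If it reports an incidence $\hat b\in\mathcal{C}_i=\tau_i(\mathcal{C}_1)$, then by Corollary~\ref{polytope-cor} the bisector $\bisector{r_i}{b}$ separates $R$ from $B$, so $\{r_i,b\}$ is the desired consistent subset and I would output it; otherwise I would report that none exists. Combining the $O(n\log n)$ preprocessing with the $O(n\log n)$ incidence test yields the claimed bound.

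The main obstacle is entirely contained in Theorem~\ref{point-cone-thr}: making the point-cone incidence test run in $O(n\log n)$ despite the cone being bounded by up to $n$ facets while there are $n$ translated copies and $n$ query points. Everything upstream is a faithful restatement of the lemmas and corollaries already proved, so the difficulty has been deliberately pushed into that deferred routine. A secondary point I would want to confirm is that building $\mathcal{C}_1$ (a halfspace intersection in $\RR^3$) together with its translation data fits within the same time bound, which I expect to be subsumed by the design of that algorithm.
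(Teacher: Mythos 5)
Your proposal is correct and follows essentially the same route as the paper: reduce to bisector separation via $R'$ and $B'$ (Lemma~\ref{inclusion-exclusion-lemma} and Corollary~\ref{feasible-cor}), lift to the paraboloid to get the cones of Corollary~\ref{polytope-cor}, use Lemma~\ref{translation-lemma} to see them as translates of $\mathcal{C}_1$, and finish with the $O(n\log n)$ point-cone incidence test of Theorem~\ref{point-cone-thr}. The paper's proof is exactly this constructive chain, and your added bookkeeping (hull computation, tangents, building $\mathcal{C}_1$ from at most $n$ halfspaces) indeed stays within the $O(n\log n)$ budget.
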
 

\section{One Red Point} 
In this section we revisit the consistent subset problem for the case where one input point is red and all other points are blue. Let $P$ be a set of $n$ points in the plane consisting of a red point and $n-1$ blue points.
Observe that any consistent subset of $P$ contains the only red point and some blue points. In his seminal work in SoCG 1991, Wilfong \cite{Wilfong1992} showed that $P$ has a consistent subset of size at most seven (including the red point); this implies an $O(n^6)$-time brute force algorithm for this problem.
Wilfong showed how to solve this problem in $O(n^2)$-time; his elegant algorithm transforms the consistent subset problem to the problem of covering points with disks which in turn is transformed to the problem of covering a circle with arcs. The running time of his algorithm is dominated by the transformation to the circle covering problem which involves computation of $n-1$ arcs in $O(n^2)$ time; all other transformations together with the solution of the circle covering problem take $O(n\log n)$ time (\cite[Lemma 19 and Theorem 9]{Wilfong1992}).

We first introduce the circle covering problem, then we give a summary of Wilfong's transformation to this problem, and then we show how to perform this transformation in $O(n \log^2 n)$ time which  implies the same running time for the entire algorithm. We emphasis that the most involved part of the algorithm, which is the correctness proof of this transformation, is due to Wilfong.

\begin{figure}[htb]
	\centering
	\setlength{\tabcolsep}{0in}
	$\begin{tabular}{cc}
	\multicolumn{1}{m{.5\columnwidth}}{\centering\includegraphics[width=.42\columnwidth]{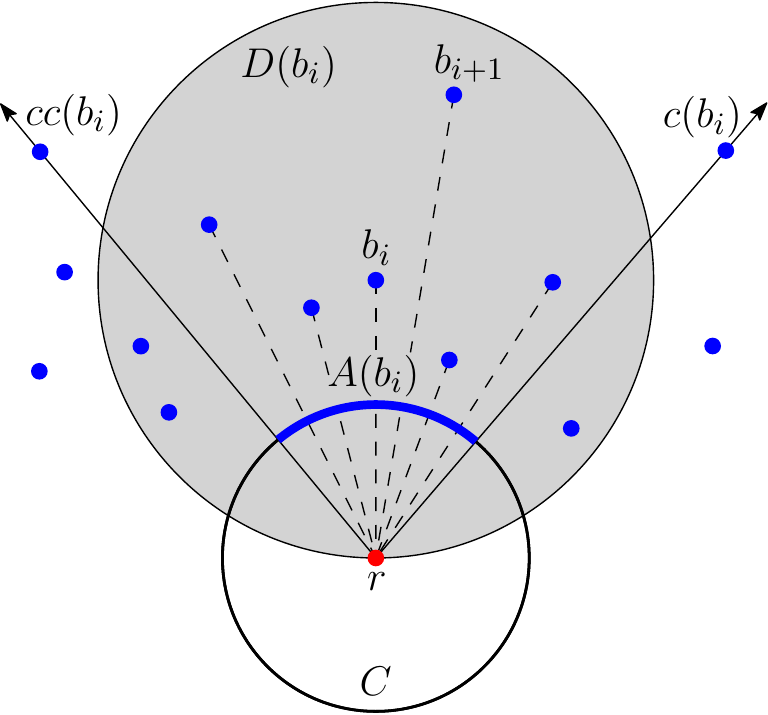}}
	&\multicolumn{1}{m{.5\columnwidth}}{\centering\includegraphics[width=.42\columnwidth]{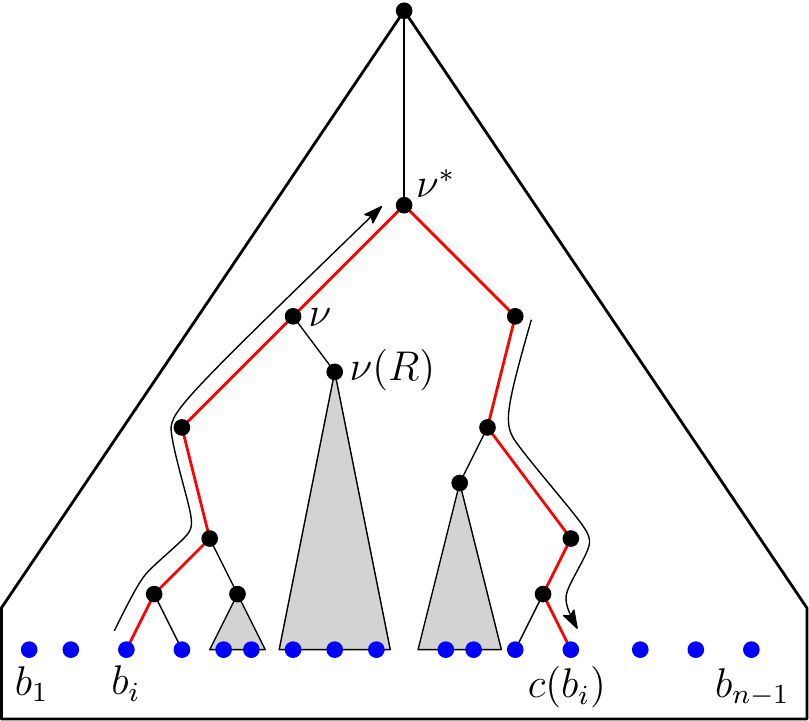}}\\
	(a)&(b)
	\end{tabular}$
	\caption{(a) Transformation to the circle covering problem. (b) The range tree $T$ on blue points.}
	\label{one-red-fig}
\end{figure}

Let $C$ be a circle and let $\mathcal{A}$ be a set of arcs covering the entire $C$. The {\em circle covering} problem asks for a subset of $\mathcal{A}$, with minimum cardinality, that covers the entire $C$.

Wilfong's algorithm starts by mapping input points to the projective plane, and then transforming (in two stages) the consistent subset problem to the circle covering problem. Let $P$ denote the set of points after the mapping, and let $r$ denote the only red point of $P$. The transformation, which is depicted in Figure~\ref{one-red-fig}(a), proceeds as follows. Let $C$ be a circle centered at $r$ that does not contain any blue point. Let $b_1,b_2,\dots,b_{n-1}$ be the blue points in clockwise circular order around $r$ ($b_1$ is the first clockwise point after $b_{n-1}$, and $b_{n-1}$ is the first counterclockwise point after $b_1$). For each point $b_i$, let $D(b_i)$ be the disk of radius $|rb_i|$ centered at $b_i$. 
Define $cc(b_i)$ to be the first counterclockwise point (measured from $b_i$) that is not in $D(b_i)$, and similarly define $c(b_i)$ to be the first clockwise point that is not in $D(b_i)$. Denote by $A(b_i)$ the open arc of $C$ that is contained in the wedge with counterclockwise boundary ray from $r$ to $cc(b_i)$ and the clockwise boundary ray from $r$ to $c(b_i)$.\footnote{Wilfong shrinks the endpoint of $A(b_i)$ that corresponds to $cc(b_i)$ by half the clockwise angle from $cc(b_i)$ to the next point, and shrinks the endpoint of $A(b_i)$ that corresponds to $c(b_i)$ by half the counterclockwise angle from $c(b_i)$ to the previous point.} Let $\mathcal{A}$ be the set of all arcs $A(b_i)$; since blue points are assumed to be in circular order, $\mathcal{A}$ covers the entire $C$. Wilfong proved that our instance of the consistent subset problem is equivalent to the problem of covering $C$ with $\mathcal{A}$. The running time of his algorithm is dominated by the computation of $\mathcal{A}$ in $O(n^2)$ time. We show how to compute $\mathcal{A}$ in $O(n\log^2 n)$ time.

In order to find each arc $A(b_i)$ it suffices to find the points $cc(b_i)$ and $c(b_i)$. Having the clockwise ordering of points around $r$, one can find these points in $O(n)$ time for each $b_i$, and consequently in $O(n^2)$ time for all $b_i$'s. In the rest of this section we show how to find $c(b_i)$ for all $b_i$'s in $O(n\log^2 n)$ time; the points $cc(b_i)$ can be found in a similar fashion.

By the definition of $c(b_i)$ all points of the sequence $b_{i+1}, \dots,c(b_i)$, except $c(b_i)$, lie inside $D(b_i)$. Therefore among all points $b_{i+1},\dots, c(b_i)$, the point $c(b_i)$ is the farthest from $b_i$. This implies that in the farthest-point Voronoi diagram of $b_{i+1},\dots,c(b_i)$, the point $b_i$ lies in the cell of $c(b_i)$. To exploit this property of $c(b_i)$, we construct a 1-dimensional range tree $T$ on all blue points based on their clockwise order around $r$; blue points are stored at the leaves of $T$ as in Figure~\ref{one-red-fig}(b). At every internal node $\nu$ of $T$ we store the farthest-point Voronoi diagram of the blue points that are stored at the leaves of the subtree rooted at $\nu$; we refer to this diagram by FVD($\nu$). This data structure can be computed in $O(n\log^2 n)$ time because $T$ has $O(\log n)$ levels and in each level we compute farthest-point Voronoi diagrams of total $n-1$ points. To simplify our following description, at the moment we assume that $b_1,\dots, b_{n-1}$ is a linear order. At the end of this section, in Remark 1, we show how to deal with the circular order. 

We use the above data structure to find each point $c(b_i)$ in $O(\log^2 n)$ time. To that end, we walk up the tree from the leaf containing $b_i$ (first phase), and then walk down the tree (second phase) as described below; also see Figure~\ref{one-red-fig}(b). For every internal node $\nu$, let $\nu(L)$ and $\nu(R)$ denote its left and right children, respectively. In the first phase, for every internal node $\nu$ in the walk, we locate the point $b_i$ in FVD($\nu(R)$) and find the point $b_f$ that is farthest from $b_i$. If $b_f$ lies in $D(b_i)$ then also does every point stored at the subtree of $\nu(R)$. In this case we continue walking up the tree and repeat the above point location process until we find, for the first time, the node $\nu^*$ for which $b_f$ does not lie in $D(b_i)$. To this end we know that $c(b_i)$ is among the points stored at $\nu^*(R)$. Now we start the second phase and walk down the tree from $\nu^*(R)$. For every internal node $\nu$ in this walk, we locate $b_i$ in FVD($\nu(L)$) and find the point $b_f$ that is farthest from $b_i$. If $b_f$ lies in $D(b_i)$, then also does every point stored at $\nu(L)$, and hence we go to $\nu(R)$, otherwise we go to $\nu(L)$. At the end of this phase we land up in a leaf of $T$, which stores $c(b_i)$. The entire walk has $O(\log n)$ nodes and at every node we spend $O(\log n)$ time for locating $b_i$. Thus the time to find $c(b_i)$ is $O(\log^2 n)$. Therefore, we can find all $c(b_i)$'s in $O(n\log^2 n)$ total time.   

\begin{theorem}
	A minimum consistent subset of $n$ points in the plane, where one point is red and all other points are blue, can be computed in $O(n \log^2 n)$ time. 
\end{theorem}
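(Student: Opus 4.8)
The plan is to invoke Wilfong's transformation, whose correctness I take as given, reducing our instance to an equivalent circle covering instance on the circle $C$ with arc set $\mathcal{A}=\{A(b_i)\}$. Since the circle covering problem on the resulting $n-1$ arcs (together with the remaining transformations) can be solved in $O(n\log n)$ time, the cost of the whole algorithm is dominated by the construction of $\mathcal{A}$. As each arc $A(b_i)$ is fully determined by the two points $c(b_i)$ and $cc(b_i)$, it suffices to compute these two points for every $b_i$ within the target bound; I will describe how to find all $c(b_i)$, the computation of $cc(b_i)$ being symmetric.

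The key structural fact is that along the clockwise sequence $b_{i+1},\dots,c(b_i)$ every point except $c(b_i)$ lies in $D(b_i)$, so $c(b_i)$ is the farthest from $b_i$ in this sequence. Equivalently, in the farthest-point Voronoi diagram of the set $\{b_{i+1},\dots,c(b_i)\}$ the query point $b_i$ lies in the cell of $c(b_i)$. To be able to answer such queries over the relevant contiguous ranges, I would first build a one-dimensional range tree $T$ whose leaves hold the blue points in clockwise order around $r$, storing at each internal node $\nu$ the farthest-point Voronoi diagram FVD$(\nu)$ of the leaves below $\nu$. Because $T$ has $O(\log n)$ levels and the total number of points per level is $n-1$, all these diagrams are computed in $O(n\log^2 n)$ time.

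To locate $c(b_i)$, I would proceed in two phases. In the ascending phase, starting from the leaf of $b_i$, at each node $\nu$ on the path I locate $b_i$ in FVD$(\nu(R))$ and read off the farthest point $b_f$; if $b_f\in D(b_i)$ then so is the entire right subtree, and I continue upward, stopping at the first node $\nu^*$ where $b_f\notin D(b_i)$, which certifies that $c(b_i)$ lies in $\nu^*(R)$. In the descending phase I walk down from $\nu^*(R)$: at each node I locate $b_i$ in FVD$(\nu(L))$, descending into $\nu(R)$ when the farthest point lies in $D(b_i)$ and into $\nu(L)$ otherwise, eventually landing at the leaf holding $c(b_i)$. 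Each of the $O(\log n)$ nodes on this walk costs $O(\log n)$ for a point location in a farthest-point Voronoi diagram, so $c(b_i)$ is found in $O(\log^2 n)$ time and all of them in $O(n\log^2 n)$ time.

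I expect the main obstacle to be the bookkeeping that makes the up/down walk correct — specifically, arguing that the canonical nodes visited cover exactly the prefix starting at $b_{i+1}$, so that the first escape from $D(b_i)$ pinpoints $c(b_i)$, and handling the fact that the order around $r$ is genuinely circular rather than linear (deferred to the remark, e.g.\ by duplicating the sequence or splitting each query range into two contiguous pieces). Once all $c(b_i)$ and $cc(b_i)$, and hence all arcs $A(b_i)$, are in hand, the circle covering instance is solved in $O(n\log n)$ time, yielding the claimed $O(n\log^2 n)$ total running time.
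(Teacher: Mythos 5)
Your proposal is correct and follows essentially the same route as the paper: Wilfong's transformation to circle covering taken as given, a range tree over the clockwise order around $r$ storing farthest-point Voronoi diagrams at internal nodes, the same two-phase up/down walk with $O(\log n)$ point locations per node to find each $c(b_i)$ (and symmetrically $cc(b_i)$), and the same fix for circularity by duplicating the leaf sequence, which is exactly the paper's Remark~1. Nothing is missing relative to the paper's own argument.
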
 

{\noindent \bf Remark 1.} To deal with the circular order $b_1,\dots, b_{n-1}$, we build the range tree $T$ with $2(n-1)$ leaves $b_1,\dots,b_{n-1},b_1,\dots, b_{n-1}$.  
For a given $b_{i}$, the point $c(b_i)$ can be any of the points $b_{i+1},\dots, b_{n-1},\allowbreak  b_{1},\dots,b_{i-1}$. To find $c(b_i)$, we first follow the path from the root of $T$ to the leftmost leaf that stores $b_i$, and then from that leaf we start looking for $c(b_i)$ as described above.

\section{Restricted Point Sets}
In this section we present polynomial-time algorithms for the consistent subset problem on three restricted classes of point sets. First we present an $O(n)$-time algorithm for collinear points; this improves the previous quadratic-time algorithm of Banerjee \etal \cite{Banerjee2018}. Then we present an involved non-trivial $O(n^6)$-time dynamic programming algorithm for points that are placed on two parallel lines. Finally we present an $O(n^4)$-time algorithm for two-colored points, namely red and blue, that are placed on two parallel lines such that all points on one line are red and all points on the other line are blue.

\subsection{Collinear Points}
\label{collinear-section}
Let $P$ be a set of $n$ colored points on the $x$-axis, and let $p_1,\dots,p_n$ be the sequence of these points from left to right. We present a dynamic programming algorithm that solves the consistent subset problem on $P$. To simplify the description of our algorithm we add a point $p_{n+1}$ very far (at distance at least $|p_1p_n|$) to the right of $p_n$. We set the color of $p_{n+1}$ to be different from that of $p_{n}$. Observe that every solution for $P\cup \{p_{n+1}\}$ contains $p_{n+1}$. Moreover, by removing $p_{n+1}$ from any optimal solution of $P\cup \{p_{n+1}\}$ we obtain an optimal solution for $P$. 
Therefore, to compute an optimal solution for $P$, we first compute an optimal solution for $P\cup \{p_{n+1}\}$ and then remove $p_{n+1}$.

Our algorithm maintains a table $T$ with $n+1$ entries $T(1),\dots, T(n+1)$. Each table entry $T(k)$ represents the number of points in a minimum consistent subset of $P_k=\{p_1,\dots,p_k\}$ provided that $p_k$ is in this subset. The number of points in an optimal solution for $P$ will be $T(n+1)-1$; the optimal solution itself can be recovered from $T$. In the rest of this section we show how to solve a subproblem with input $P_{k}$ provided that $p_k$ should be in the solution (thereby in the rest of this section the phrase ``solution of $P_k$'' refers to a solution that contains $p_k$). In fact, we show how to compute $T(k)$, by a bottom-up dynamic programming algorithm that scans the points from left to right. If $P_k$ is monochromatic, then the optimal solution contains only $p_k$, and thus, we set $T(k)=1$. Hereafter assume that $P_k$ is not monochromatic. Consider the partition of $P_k$ into maximal blocks of consecutive points such that the points in each block have the same color. Let $B_1,B_2,\dots,B_{m-1},B_m$ denote these blocks from left to right, and notice that $p_k$ is in $B_m$. Assume that the points in $B_m$
are red and the points in $B_{m-1}$ are blue.
Let $p_{y}$ be the leftmost point in $B_{m-1}$; see Figure~\ref{collinear-fig}(a). Any optimal solution for $P_k$ contains at least one point from $\{p_y,\dots, p_{k-1}\}$; let $p_i$ be the rightmost such point ($p_i$ can be either red or blue). Then, $T(k)=T(i)+1$. Since we do not know the index $i$, we try all possible values in $\{y,\dots, k-1\}$ and select one that produces a {\em valid} solution, and that minimizes $T(k)$:
\begin{equation}
\notag
T(k)=\min\{T(i)+1 \mid {i\in\{y,\dots,k-1\} \text{ and $i$ produces a valid solution}}\}.
\end{equation} 
The index $i$ produces a valid solution (or $p_i$ is valid) if one of the following conditions hold:

\begin{enumerate}[$(i)$]
	\item $p_i$ is red, or
	\item $p_i$ is blue, and for every $j\in \{i+1,\dots, k-1\}$ it holds that if $p_j$ is blue then $p_j$ is closer to $p_i$ than to $p_k$, and if $p_j$ is red then $p_j$ is closer to $p_k$ than to $p_i$.
\end{enumerate}
	
If $(i)$ holds then $p_i$ and $p_k$ have the same color. In this case the validity of our solution for $P_k$ is ensured by the validity of the solution of $P_i$. If $(ii)$ holds then $p_i$ and $p_k$ have distinct colors. In this case the validity of our solution for $P_k$ depends on the colors of points $p_{i+1},\dots,p_{k-1}$. To verify the validity in this case, it suffices to check the colors of only two points that are to the left and to the right of the mid-point of the segment $p_ip_k$. This can be done in $O(|B_{m-1}|)$ time for all blue points in $B_{m-1}$ while scanning them from left to right. Thus, $T(k)$ can be computed in $O(k)$ time because $|B_{m-1}|=O(k)$. Therefore, the total running time of the above algorithm is $O(n^2)$.

\begin{figure}[htb]
	\centering
	\setlength{\tabcolsep}{0in}
	$\begin{tabular}{cc}
	\multicolumn{1}{m{.5\columnwidth}}{\centering\includegraphics[width=.49\columnwidth]{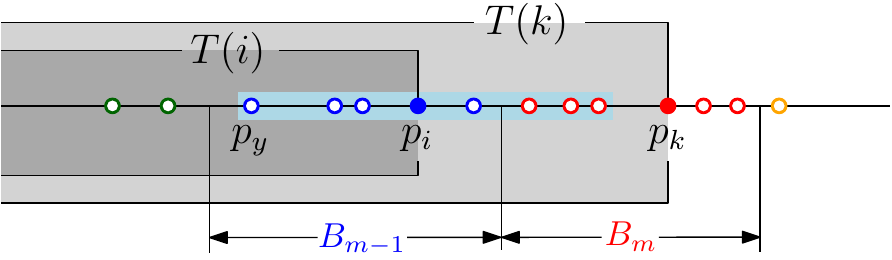}}
	&\multicolumn{1}{m{.5\columnwidth}}{\centering\includegraphics[width=.44\columnwidth]{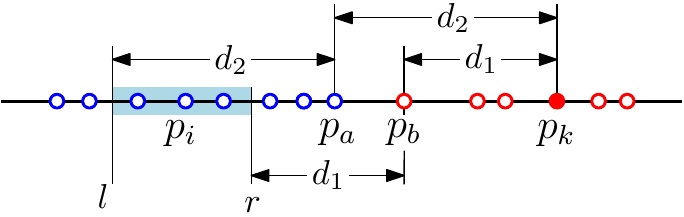}}\\
	(a)&(b)
	\end{tabular}$
	\caption{(a) Illustration of the computation of $T(k)$ from $T(i)$. (b) Any blue point in the range $[l,r]$ is valid.}
	\label{collinear-fig}
\end{figure}	
		
We are now going to show how to compute $T(k)$ in constant time, which in turn improves the total running time to $O(n)$. To that end we first prove the following lemma.

\begin{lemma}
	\label{diff-lemma}
	Let $s\in \{1,\dots,m\}$ be an integer, $p_i,p_{i+1},\dots,p_j$ be a sequence of points in $B_s$, and $x\in\{i,\dots,j\}$ be an index for which $T(x)$ is minimum. Then, $T(j)\leqslant T(x)+1$.
\end{lemma}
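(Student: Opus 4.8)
The plan is to prove the inequality by taking an optimal solution for the prefix $P_x$ and extending it by the single point $p_j$, then verifying that the enlarged set is still consistent for $P_j$. Concretely, I would let $S$ be a minimum consistent subset of $P_x$ that contains $p_x$, so that $|S|=T(x)$, and set $S'=S\cup\{p_j\}$. Then $p_j\in S'$ and $|S'|\leqslant T(x)+1$, so it suffices to show that $S'$ is a consistent subset of $P_j$: this at once yields $T(j)\leqslant |S'|\leqslant T(x)+1$. Note that this reduction works for \emph{any} $x\in\{i,\dots,j\}$; choosing the index minimizing $T(x)$ simply produces the sharpest bound, which is the form stated.

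The key structural facts I would isolate first are that, since $S\subseteq P_x$ and $p_x\in S$, the point $p_x$ is the rightmost point of $S$, and that $p_x,p_{x+1},\dots,p_j$ all lie in the same block $B_s$ and hence share a single color. With these in hand I would check consistency of $S'$ for each $p_t\in P_j\setminus S'$ in two cases. If $p_t\in P_x$ (so necessarily $t<x$, as $p_x\in S'$), let $p_a$ be the closest point of $p_t$ in $S$; it has the color of $p_t$ because $S$ is consistent for $P_x$. Since $p_x\in S$ lies strictly between $p_t$ and $p_j$ on the line, we have $|p_tp_a|\leqslant|p_tp_x|<|p_tp_j|$, so the newly added point $p_j$ is strictly farther from $p_t$ than $p_a$; hence the closest point of $p_t$ in $S'$ is unchanged and correctly colored. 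If instead $t>x$ (so $x<t<j$, as $p_j\in S'$), every point of $S$ lies at or to the left of $p_x$, hence strictly left of $p_t$, whereas $p_j$ lies strictly to its right; therefore the closest point of $p_t$ in $S'$ is either $p_x$ (the rightmost point of $S$) or $p_j$, and both belong to $B_s$ and so have the same color as $p_t$.

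The step I expect to require the most care is the second case: the crucial point is to observe that the tail $p_{x+1},\dots,p_{j-1}$ of the block is sandwiched between the two same-colored anchors $p_x$ and $p_j$, which is exactly what lets a single extra point $p_j$ absorb the entire remainder of the block without creating any inconsistency. A minor prerequisite to confirm is that it is legitimate to assume $p_x\in S$, which holds by the standing convention that a solution of $P_x$ is required to contain $p_x$. Combining the two cases shows $S'$ is a consistent subset of $P_j$ containing $p_j$, completing the argument.
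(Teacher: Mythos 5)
Your proposal is correct and takes essentially the same approach as the paper: the paper's proof is exactly the one-line observation that adding $p_j$ to an optimal solution of $P_x$ yields a valid solution of size $T(x)+1$ for $P_j$, and your two-case verification (unchanged nearest neighbors for $t<x$ since $p_x$ shields them from $p_j$; the sandwiched tail $p_{x+1},\dots,p_{j-1}$ served by the same-colored anchors $p_x$ and $p_j$) simply supplies the details the paper leaves implicit. The only cosmetic point is the degenerate case $x=j$, where ``$p_x$ lies strictly between'' fails, but there $S'=S$ and the inequality is trivial.
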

\begin{proof}
	To verify this inequality, observe that by adding $p_j$ to the optimal solution of $P_x$ we obtain a valid solution (of size $T(x)+1$) for $P_j$. Therefore, any optimal solution of $P_j$ has at most $T(x)+1$ points, and thus $T(j)\leqslant T(x)+1$.
\end{proof}

At every point $p_j$, in every block $B_s$, we store the index $i$ of the first point $p_i$ to the left of $p_j$ where $p_i\in B_s$ and $T(i)$ is strictly smaller than $T(j)$; if there is no such point $p_i$ then we store $j$ at $p_j$. These indices can be maintained in linear time while scanning the points from left to right. We use these indices to compute $T(k)$ in constant time as described below. 
   
Notice that if the minimum, in the above calculation of $T(k)$, is obtained by a red point in $B_m$ then it always produces a valid solution, but if the minimum is obtained by a blue point then we need to verify its validity. In the former case, it follows from Lemma~\ref{diff-lemma} that the smallest $T(\cdot)$ for red points in $B_m\setminus\{p_k\}$ is obtained either by $p_{k-1}$ or by the point whose index is stored at $p_{k-1}$. Therefore we can find the smallest $T(\cdot)$ in constant time. Now consider the latter case where the minimum is obtained by a blue point in $B_{m-1}$. Let $p_a$ be the rightmost point of $B_{m-1}$, and let $p_b$ be the leftmost endpoint of $B_m$. Set $d_1=|p_bp_k|$ and $d_2=|p_ap_k|$ as depicted in Figure~\ref{collinear-fig}(b). Set $l=x(p_a)-d_2$ and $r=x(p_b)-d_1$, where $x(p_a)$ and $x(p_b)$ are the $x$-coordinates of $p_a$ and $p_b$. Any point $p_i\in B_{m-1}$ that is to the right of $r$ is invalid because otherwise $p_b$ would be closer to $p_i$ than to $p_k$. Any point $p_i\in B_{m-1}$ that is to the left of $l$ is also invalid because otherwise $p_a$ would be closer to $p_k$ than to $p_i$. However, every point $p_i\in B_{m-1}$, that is in the range $[l,r]$, is valid because it satisfies condition $(ii)$ above. Thus, to compute $T(k)$ it suffices to find a point of $B_{m-1}$ in range $[l,r]$ with the smallest $T(\cdot)$. By slightly abusing notation, let $p_r$ be the rightmost point of $B_{m-1}$ in range $[l,r]$. It follows from Lemma~\ref{diff-lemma} that the smallest $T(\cdot)$ is obtained either by $p_r$ or by the point whose index is stored at $p_{r}$. Thus, in this case also, we can find the smallest $T(\cdot)$ in constant time.

It only remains to identify, in constant time, the index that we should store at $p_k$ (to be used in next iterations). If $p_k$ is the leftmost point in $B_m$, then we store $k$ at $p_k$. Assume that $p_k$ is not the leftmost point in $B_m$, and let $x$ be the index stored at $p_{k-1}$. In this case, if $T(x)$ is smaller than $T(k)$ then we store $x$ at $p_k$, otherwise we store $k$. This assignment ensures that $p_k$ stores a correct index. 

Based on the above discussion we can compute $T(k)$ and identify the index at $p_k$ in constant time. 
Therefore, our algorithm computes all values of $T(\cdot)$ in $O(n)$ total time. The following theorem summarizes our result in this section.

\begin{theorem}
	A minimum consistent subset of $n$ collinear colored points can be computed in $O(n)$ time, provided that the points are given from left to right.  
\end{theorem}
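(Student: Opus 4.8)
The plan is to prove that the dynamic program described above is correct and that it fills the table $T$ in linear time; the theorem then follows, since an optimal consistent subset of $P$ is obtained from an optimal consistent subset of $P\cup\{p_{n+1}\}$ by deleting the sentinel $p_{n+1}$, and the sorted order of the points is given as input (so no $O(n\log n)$ sorting is incurred).

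For correctness I would argue by induction on $k$ that $T(k)$ equals the minimum size of a consistent subset of $P_k$ that contains $p_k$. The monochromatic case is immediate. Otherwise, let $p_i$ be the rightmost chosen point strictly to the left of $p_k$ in an optimal solution; it lies in $\{p_y,\dots,p_{k-1}\}$ because $P_k$ is not monochromatic. The restriction of this solution to $P_i$ is a consistent subset of $P_i$ containing $p_i$, since $p_i$ shields every point $p_j$ with $j\le i$ from $p_k$ (such a $p_j$ is closer to $p_i$ than to $p_k$), so adding $p_k$ never changes the classification of the left part; this gives $T(k)\ge T(i)+1$. Conversely, prepending an optimal $P_i$-solution to $p_k$ is consistent exactly when $p_i$ is \emph{valid}: each intermediate $p_j$ with $i<j<k$ has its nearest chosen neighbor equal to $p_i$ or to $p_k$, so its correct classification is decided purely by the midpoint of $p_ip_k$, which is what conditions (i) and (ii) encode. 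Combining the two directions yields the recurrence $T(k)=\min\{T(i)+1\}$ over valid $i$.

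For the running time I would verify the $O(1)$-per-entry claim by splitting on the color of the minimizing point. If the minimum is realized by a red point of $B_m\setminus\{p_k\}$, then Lemma~\ref{diff-lemma} applied to the block $B_m$ forces the smallest $T(\cdot)$ over these points to be attained at $p_{k-1}$ or at the predecessor index stored there, so it is read in constant time. If it is realized by a blue point, I would first show the valid candidates of $B_{m-1}$ form the contiguous range $[l,r]$ with $l=x(p_a)-d_2$ and $r=x(p_b)-d_1$: a point right of $r$ misclassifies $p_b$ and a point left of $l$ misclassifies $p_a$, while every point of $[l,r]$ satisfies condition (ii). Lemma~\ref{diff-lemma}, now applied to $B_{m-1}$, again reduces the search to the rightmost valid point $p_r$ or its stored index. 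Finally the index stored at $p_k$ is updated from that of $p_{k-1}$ by one comparison, preserving the invariant that it names the nearest earlier same-block point with strictly smaller $T$-value. Every step is $O(1)$, so the whole table is computed in $O(n)$ time.

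The hard part will not be any single inequality but the interlocking of the two constant-time ingredients: one must check that $[l,r]$ captures exactly the valid blue candidates and, simultaneously, that the stored-index invariant is strong enough for Lemma~\ref{diff-lemma} to locate each block-minimum of $T(\cdot)$ without scanning. Verifying that these pieces fit together consistently --- including the correct $O(1)$ maintenance of the stored indices during the left-to-right scan --- is where the argument demands the most care.
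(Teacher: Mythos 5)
Your proposal is correct and follows essentially the same route as the paper: the sentinel point, the recurrence $T(k)=\min\{T(i)+1\}$ over indices validated by conditions (i)--(ii), Lemma~\ref{diff-lemma}, the stored predecessor indices, and the window $[l,r]$ for the blue candidates are all exactly the paper's ingredients. The only difference is presentational---you spell out the exchange/shielding argument behind the recurrence, which the paper leaves implicit---so there is nothing substantive to add.
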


\subsection{Points on Two Parallel Lines} 
\label{mix-section}
	 In this section we study the consistent subset problem on points that are placed on two parallel lines.
	 Let $P$ and $Q$ be two disjoint sets of colored points of total size $n$, such that the points of $P$ are on a straight line $L_P$ and points of $Q$ are on a straight line $L_Q$ that is parallel to $L_P$. The goal is to find a minimum consistent subset for $P\cup Q$. We present a top-down dynamic programming algorithm that solves this problem in $O(n^6)$ time. By a suitable rotation and reflection we may assume that $L_P$ and $L_Q$ are horizontal and $L_P$ lies above $L_Q$. If any of the sets $P$ and $Q$ is empty, then this problem reduces to the collinear version that is discussed in Section~\ref{collinear-section}. Assume that none of $P$ and $Q$ is empty. An optimal solution may contain points from only $P$, only $Q$, or from both $P$ and $Q$. We consider these three cases and pick one that gives the minimum number of points:
		\begin{enumerate}
			\item {\em The optimal solution contains points from only $Q$.} Consider any solution $S\subseteq Q$. For every point $p\in P$, let $p'$ be the vertical projection of $p$ on $L_Q$. Then, a point $s\in S$ is the closest point to $p$ if and only if $s$ is the closest point to $p'$. This observation suggests the following algorithm for this case: First project all points of $P$ vertically on $L_Q$; let $P'$ be the resulting set of points. Then, solve the consistent subset problem for points in $Q\cup P'$, which are collinear on $L_Q$, with this invariant that the points of $P'$ should not be included in the solution but should be included in the validity check. This can be done in $O(n)$ time by modifying the algorithm of Section~\ref{collinear-section}. 
			
			\item {\em The optimal solution contains points from only $P$.} The solution of this case is analogous to that of previous case.
			
			\item {\em The optimal solution contains points from both $P$ and $Q$.} The description of this case is more involved. Add two dummy points $p^-$ and $p^+$ at $-\infty$ and $+\infty$ on $L_P$, respectively. Analogously, add $q^-$ and $q^+$ on $L_Q$. Color these four points by four new colors that are different from the colors of points in $P\cup Q$. See Figure~\ref{mix-high-level-fig}. Set $D=\{p^+,p^-,q^+,q^-\}$. Observe that every solution for $P\cup Q\cup D$ contains all points of $D$. Moreover, by removing $D$ from any optimal solution of $P\cup Q\cup D$ we obtain an optimal solution for $P\cup Q$. Therefore, to compute an optimal solution for $P\cup Q$, we first compute an optimal solution for $P\cup Q\cup D$ and then remove $D$. In the rest of this section we show how to compute an optimal solution for $P\cup Q\cup D$. Without loss of generality, from now on, we assume that $p^-$ and $p^+$ belong to $P$, and $q^-$ and $q^+$ belong to $Q$. For a point $p$ let $\ell_p$ be the vertical line through $p$. 
			
			\begin{figure}[htb]
				\centering
				\includegraphics[width=.7\columnwidth]{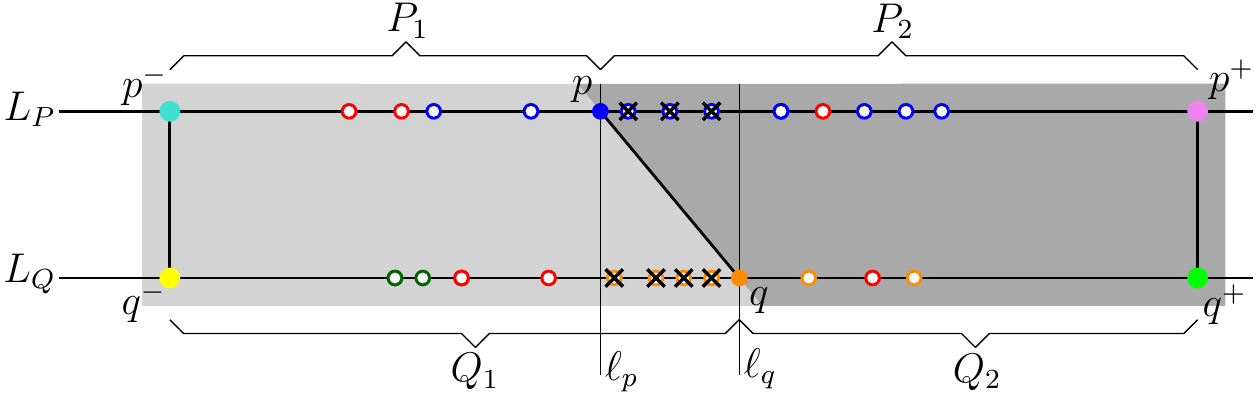}
				\caption{The pair $(p,q)$ is the closest pair in the optimal solution where $p\in P\setminus\{p^+,p^-\}$ and $q\in Q\setminus\{q^+,q^-\}$. This pair splits the problem into two independent subproblems.}
				\label{mix-high-level-fig}
			\end{figure}
			
			In the following description the term ``solution" refers to an optimal solution. Consider a solution for this problem with input pair $(P,Q)$, and let $p$ and $q$ be the closest pair in this solution such that $p\in P\setminus\{p^+,p^-\}$ and $q\in Q\setminus\{q^+,q^-\}$ (for now assume that such a pair exists; later we deal with all different cases). These two points split the problem into two subproblems $(P_1,Q_1)$ and $(P_2,Q_2)$ where $P_1$ contains all points of $P$ that are to the left of $p$ (including $p$), $P_2$ contains all points of $P$ that are to the right of $p$ (including $p$), and $Q_1, Q_2$ are defined analogously. Our choice of $p$ and $q$ ensures that no point in the solution lies between the vertical lines $\ell_p$ and $\ell_q$ because otherwise that point would be part of the closest pair. See Figure~\ref{mix-high-level-fig}. Thus, $(P_1,Q_1)$ and $(P_2,Q_2)$ are independent instances of the problem in the sense that for any point in $P_1\cup Q_1$ (resp. $P_2\cup Q_2$) its closest point in the solution belongs to $P_1\cup Q_1$ (resp. $P_2\cup Q_2$). Therefore, if $p$ and $q$ are given to us, we can solve $(P,Q)$ as follows: First we recursively compute a solution for $(P_1,Q_1)$ that contains $p^-,q^-,p,q$ and does not contain any point between $\ell_p$ and $\ell_q$. We compute an analogous solution for $(P_2,Q_2)$ recursively. Then, we take the union of these two solutions as our solution of $(P,Q)$. We do not know $p$ and $q$, and thus we try all possible choices. 
			
			Let $p_1,p_2,\dots,p_{|P|}$ and $q_1,q_2,\dots,q_{|Q|}$ be the points of $P$ and $Q$, respectively, from left to right, where $p_1=p^-$ and $q_1=q^-$. In later steps in our recursive solution we get subproblems of type $S(i,j,k,l)$ where the input to this subproblem is $\{p_i,\dots,p_j\}\cup\{q_k,\dots,q_l\}$ and we want to compute a minimum consistent subset that
			\begin{itemize}
				\item contains $p_i, p_j,q_k$, and $q_l$, and
				\item does not contain any point between $\ell_{p_i}$ and $\ell_{q_k}$, nor any point between $\ell_{p_j}$ and $\ell_{q_l}$. 
			\end{itemize}
			To simplify our following description, we may also refer to $S(\cdot)$ as a four dimensional matrix where each of its entries stores the size of the solution for the corresponding subproblem; the solution itself can also be retrieved from $S(\cdot)$. The solution of the original problem will be stored in $S(1,|P|,1,|Q|)$. In the rest of this section we show how to solve $S(i,j,k,l)$ by a top-down dynamic programming approach. Let $p_{i'}$ and $q_{k'}$ be the first points of $P$ and $Q$, respectively, that are to the right sides of both $\ell_{p_i}$ and $\ell_{q_k}$, and let $p_{j'}$ and $q_{l'}$ be the first points of $P$ and $Q$, respectively, that are to the left sides of both $\ell_{p_j}$ and $\ell_{q_l}$; see Figure~\ref{mix-fig-1}. Depending on whether or not the solution of $S(i,j,k,l)$ contains points from $\{p_{i'},\dots,p_{j'}\}$ and $\{q_{k'},\dots,q_{l'}\}$ we consider the following three cases and pick one that minimizes $S(i,j,k,l)$.
			\begin{enumerate}
				\item {\em The solution does not contain points from any of $\{p_{i'},\dots,p_{j'}\}$ and $\{q_{k'},\dots,q_{l'}\}$.} Thus, the solution contains only $p_i$, $p_j$, $q_k$, and $q_l$. To handle this case, we verify the validity of $\{p_i, p_j,q_k, q_l\}$. If this set is a valid solution, then we assign $S(i,j,k,l)=4$, otherwise we assign $S(i,j,k,l)=+\infty$.
				\item {\em The solution contains points from both $\{p_{i'},\dots,p_{j'}\}$ and $\{q_{k'},\dots,q_{l'}\}$.} Let $p_s\in\{p_{i'},\allowbreak \dots,\allowbreak p_{j'}\}$ and $q_t\in\{q_{k'},\dots,q_{l'}\}$ be two such points with minimum distance. Our choice of $p_s$ and $q_t$ ensures that no point of the solution lies between $\ell_{p_s}$ and $\ell_{q_t}$; see Figure~\ref{mix-fig-1}. Therefore, the solution of $S(i,j,k,l)$ is the union of the solutions of subproblems $S(i,s,k,t)$ and $S(s,j,t,l)$. Since we do not know $s$ and $t$, we try all possible pairs and pick one that minimizes $S(\cdot)$, that is 
				$$S(i,j,k,l)=\min\{S(i,s,k,t)+S(s,j,t,l)-2\mid i'\leqslant s \leqslant j',~k' \leqslant t \leqslant l'\},$$
				
				where ``$-2$" comes from the fact that $p_s$ and $q_t$ are counted twice. The validity of this solution for $S(i,j,k,l)$ is ensured by the validity of the solutions of $S(i,s,k,t)$ and $S(s,j,t,l)$, and the fact that these solutions do not contain any point between $\ell_{p_s}$ and $\ell_{q_t}$.
				
				\begin{figure}[htb]
					\centering
					\includegraphics[width=.5\columnwidth]{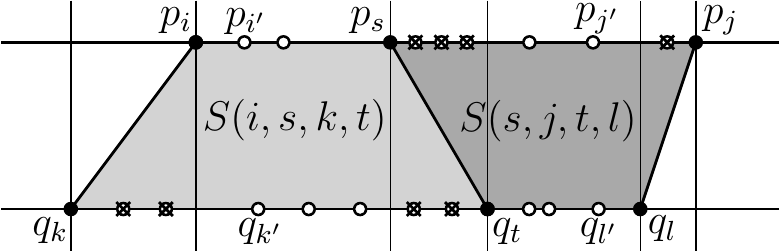}
					\caption{$(p_s,q_t)$ is the closest pair in the solution where $s\in\{i',\dots,j'\}$ and $t\in\{k',\dots,l'\}$.}
					\label{mix-fig-1}
				\end{figure}
				
				\item {\em The solution contains points from $\{q_{k'},\dots,q_{l'}\}$ but not from $\{p_{i'},\dots,p_{j'}\}$, or vice versa.} Because of symmetry, we only describe how to handle the first case. If the solution contains exactly one point form $\{q_{k'},\dots,q_{l'}\}$, then we can easily solve this subproblem by trying every point $q_t$ in this set and pick one for which $\{p_i,p_j,q_k,q_t,q_l\}$ is valid solution, then we set $S(i,j,k,l)=5$. Hereafter assume that the solution contains at least two points from $\{q_{k'},\dots,q_{l'}\}$. Let $q_s$ and $q_t$ be the leftmost and rightmost such points, respectively. Consider the Voronoi diagram of $p_i,q_k,q_s$ and the Voronoi diagram of $p_j,q_l,q_t$. Depending on whether or not the Voronoi cells of $q_k$ and $q_l$ intersect the line segment $p_ip_j$ we consider the following two cases. 
				
				\begin{enumerate}
					\item {\em The Voronoi cell of $q_k$ or the Voronoi cell of $q_l$ does not intersect $p_ip_j$.} Because of symmetry we only describe how to handle the case where the Voronoi cell of $q_k$ does not intersect $p_ip_j$. See Figure~\ref{mix-fig-2}. In this case, $q_k$ cannot be the closest point to any of the points $p_{i+1},\dots, p_{j-1}$, and thus, the solution of $S(i,j,k,l)$ consists of $q_k$ together with the solution of $S(i,j,s,l)$. Since we do not know $s$, we try all possible choices. An index $s\in\{k',\dots,l'-1\}$ is {\em valid} if the Voronoi cell of $q_k$---in the Voronoi diagram of $p_i,q_k,q_s$---does not intersect the line segment $p_ip_j$, and every point in $\{q_{k+1},\dots,q_{s-1}\}$ has the same color as its closest point among $p_i$, $q_k$, and $q_s$. We try all possible choices of $s$ and pick one that is valid and minimizes $S(i,j,k,l)$. Thus,					
						$$S(i,j,k,l)=\min \{S(i,j,s,l)+1\mid i'\leqslant s\leqslant l'-1 \text{ and $s$ is valid}\}.$$  
					
					\begin{figure}[htb]
						\centering
						\setlength{\tabcolsep}{0in}
						$\begin{tabular}{cc}
						\multicolumn{1}{m{.5\columnwidth}}{\centering\includegraphics[width=.35\columnwidth]{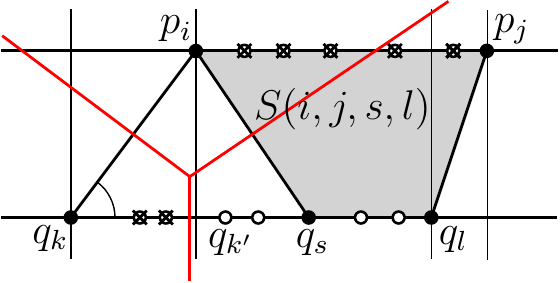}}
						&\multicolumn{1}{m{.5\columnwidth}}{\centering\includegraphics[width=.35\columnwidth]{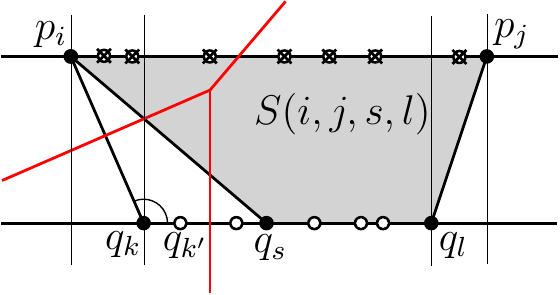}}
						\end{tabular}$
						\caption{The Voronoi cell of $q_k$ does not intersect the line segment $p_ip_j$.}
						\label{mix-fig-2}
					\end{figure}
						
					\item {\em The Voronoi cells of both $q_k$ and $q_l$ intersect $p_ip_j$.} In this case the Voronoi cells of both $q_t$ and $q_s$ also intersect $p_ip_j$; see Figure~\ref{mix-fig-3}.
					In the following description we slightly abuse the notation and refer to the input points $\{p_i,\dots,p_j\}$ and $\{q_k,\dots,q_l\}$ by $P$ and $Q$, respectively. Let $p_{s'}$ be the first point of $P$ to the right of $\ell_{q_s}$, and let $p_{t'}$ be the be the first point of $P$ to the left of $\ell_{q_t}$. Let $P'=\{p_{s'},\dots,p_{t'}\}$ and $Q'=\{q_{s},\dots,q_{t}\}$. Consider any (not necessarily optimal) solution of $S(i,j,k,l)$ that consists of $V=\{p_i, p_j, q_k, q_t, q_s, q_l\}$ and some other points in $\{q_{s+1},\dots,q_{t-1}\}$. The closest point in this solution, to any point of $(P\cup Q)\setminus (P'\cup Q')$, is in $V$. Thus, the (optimal) solution of $S(i,j,k,l)$ consists of $V$ and the optimal solution $S'$ of the consistent subset problem on $P'\cup Q'$ provided that $q_s$ and $q_t$ are in $S'$ and no point of $P'$ is in $S'$. Let $T(s,t)$ denote this new problem on $P'\cup Q'$. We solve $T(s,t)$ by a similar method as in case 1: First we project points of $P'$ on $L_Q$ and then we solve the problem for collinear points. Let $P''$ be the set of projected points. To solve $T(s,t)$, we solve the consistent subset problem for $Q'\cup P''$, which are collinear, with this invariant that the solution contains $q_s$ and $q_t$, and does not contain any point of $P''$; see Figure~\ref{mix-fig-3}. This can be done simply by modifying the algorithm of Section~\ref{collinear-section}. 				
					Therefore, $S(i,j,k,l)=T(s,t)+4$. A pair $(s,t)$ of indices is valid if for every point $x$ in $(P\cup Q)\setminus (P'\cup Q')$ it holds that the color of $x$ is the same as the color of $x$'s closest point in $V$.					
					Since we do not know $s$ and $t$ we try all possible pairs and pick one that is valid and minimizes $S(i,j,k,l)$. Therefore,
					$$S(i,j,k,l)=\min \{T(s,t)+4\mid k< s<t<l \text{ and $(s,t)$ is valid}\}.$$   
					\begin{figure}[htb]
						\centering
						\includegraphics[width=.55\columnwidth]{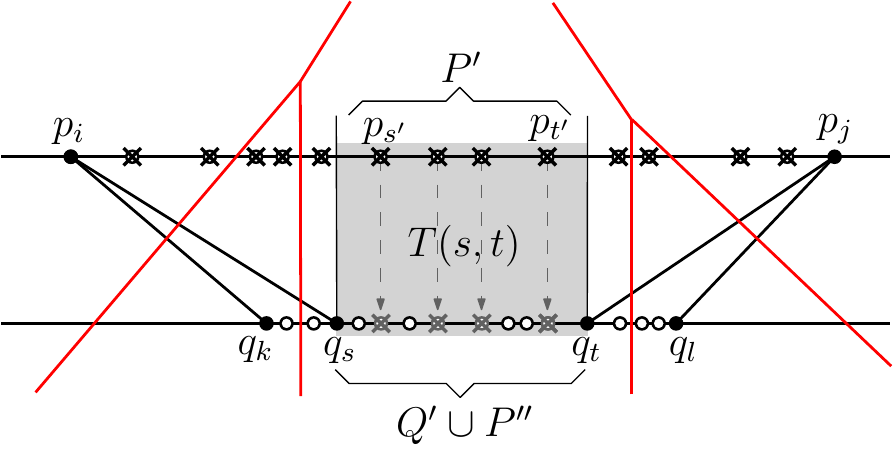}
						\caption{The Voronoi cells of both $q_k$ and $q_l$ intersect the line segment $p_ip_j$.}
						\label{mix-fig-3}
					\end{figure}
				\end{enumerate} 
			\end{enumerate}  
		\end{enumerate}

\paragraph{Running Time Analysis:} Cases 1 and 2 can be handled in $O(n\log n)$ time. Case 3 involves four subcases (a), (b), (c)-i, and (c)-ii. We classify the subproblems in these subcases by types 3(a), 3(b), 3(c)-i, and 3(c)-ii, respectively. The number of subproblems of each type is $O(n^4)$. For every subproblem of type 3(a) we only need to verify the validity of $\{p_i, p_j, q_k, q_l\}$; this can be done in $O(n)$ time. Every subproblem of type 3(b) can be solved in $O(n^2)$ time by trying all pairs $(s,t)$. Every subproblem of type 3(c)-i can be solved in $O(n^2)$ time by trying $O(n)$ possible choices for $s$ and verifying the validity of each of them in $O(n)$ time. 

\begin{wrapfigure}{r}{2.5in} 
	\centering
	\vspace{0pt} 
	\includegraphics[width=2.4in]{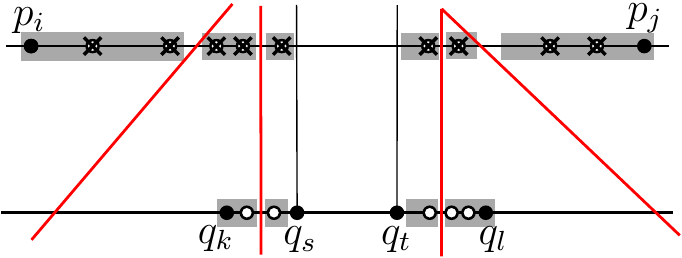} 
	\vspace{-8pt} 
\end{wrapfigure}
Now we show that every subproblem of type 3(c)-ii can also be solved in $O(n^2)$ time. To solve every such subproblem we try $O(n^2)$ pairs $(s,t)$ and we need to verify the validity of every pair. To verify the validity of $(s,t)$ we need to make sure that every point in $(P\cup Q)\setminus (P'\cup Q')$ has the same color as its closest point in $V=\{p_i, p_j, q_k, q_s,q_t,q_l\}$. The Voronoi diagrams of $p_i,q_k, q_s$ and $p_j,q_l, q_t$ together with the lines $\ell_{q_s}$ and $\ell_{q_t}$ partition the points of $(P\cup Q)\setminus (P'\cup Q')$ into 10 intervals, 6 intervals on $L_P$ and 4 intervals on $L_Q$; see the figure to the right. For $(s,t)$ to be feasible it is necessary and sufficient that all points in every interval $I$ have the same color as the point in $V$ that has $I$ in its Voronoi cell. If we know the color of points in each of these 10 intervals, then we can verify the validity of $(s,t)$ in constant time. The total number of such intervals is $O(n^2)$ and we can compute in $O(n^2)$ preprocessing time the color of all of them. Therefore, after $O(n^2)$ preprocessing time we can solve all subproblems of type 3(c)-ii in $O(n^6)$ time. Notice that the total number of subproblems of type $T(s,t)$ in case 3(c)-ii is $O(n^2)$ and we can solve all of them in $O(n^3\log n)$ time before solving subproblems $S(i,j,k,l)$. The following theorem summarizes our result in this section.

\begin{theorem}
	A minimum consistent subset of $n$ colored points on two parallel lines can be computed in $O(n^6)$ time.  
\end{theorem}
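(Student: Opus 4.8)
The plan is to show that the top-down dynamic program described above is correct and runs in $O(n^6)$ time; the theorem then follows immediately. I would organize the correctness argument in two layers. At the top level, observe that any optimal consistent subset uses points from only $Q$, only $P$, or from both; since these three cases are exhaustive, taking the minimum of the three outcomes returns the global optimum. The first two cases reduce to the collinear algorithm of Section~\ref{collinear-section} after vertical projection: an optimal $S\subseteq Q$ has the same nearest-neighbour behaviour on $P$ as on its projection $P'$, so correctness here reduces to the $O(n)$-time collinear result together with the bookkeeping that projected points are excluded from $S$ but retained in the validity check.

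The heart of the argument is the recurrence for the four-index table $S(i,j,k,l)$. I would first isolate the structural lemma that drives the whole decomposition: if $(p,q)$ is the closest pair in the solution with $p\in P$ and $q\in Q$, then no solution point can lie strictly between the vertical lines $\ell_p$ and $\ell_q$, for otherwise that point together with $p$ or $q$ would form a closer bichromatic pair. This is exactly what makes the left piece $(P_1,Q_1)$ and the right piece $(P_2,Q_2)$ independent, so that the nearest solution neighbour of any point on one side stays on that side. With this independence established, I would verify that the subproblem $S(i,j,k,l)$ asking for a minimum consistent subset of $\{p_i,\dots,p_j\}\cup\{q_k,\dots,q_l\}$ containing the four corner points and omitting points inside the two corner wedges is well defined, and that the case split on whether the solution draws points from $\{p_{i'},\dots,p_{j'}\}$ and/or $\{q_{k'},\dots,q_{l'}\}$ is exhaustive. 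Within each subcase the chosen splitting pair (or single point) again creates independent pieces, whose validity I would confirm interval by interval against the local Voronoi diagrams.

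For the running time I would count $O(n^4)$ distinct entries and bound the work per entry by type. Type 3(a) needs only an $O(n)$ validity test; types 3(b) and 3(c)-i try $O(n)$ or $O(n^2)$ split choices with $O(n)$-time validity checks, giving $O(n^2)$ per entry. The part I expect to be the main obstacle is type 3(c)-ii: each entry tries $O(n^2)$ pairs $(s,t)$, and for each one must test that every point outside the central strip $P'\cup Q'$ agrees in colour with its nearest neighbour in $V=\{p_i,p_j,q_k,q_s,q_t,q_l\}$. The crucial observation is that the two local Voronoi diagrams together with the lines $\ell_{q_s}$ and $\ell_{q_t}$ cut the outside points into only ten colour-homogeneity intervals; precomputing the colour-uniformity of all $O(n^2)$ such intervals in $O(n^2)$ total time lets each validity test run in $O(1)$, so each type-3(c)-ii entry costs $O(n^2)$ and the whole type costs $O(n^6)$. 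The auxiliary collinear problems $T(s,t)$ number $O(n^2)$ and are solved in $O(n^3\log n)$ total, which is dominated. Summing, every one of the $O(n^4)$ entries is handled in $O(n^2)$ time, for an overall bound of $O(n^6)$, completing the proof.
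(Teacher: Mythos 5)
Your proposal follows the paper's own argument essentially step for step: the same three-way top-level case split with projection to the collinear algorithm, the same closest-pair independence lemma justifying the strip between $\ell_p$ and $\ell_q$, the same four-index table $S(i,j,k,l)$ with its case analysis, and the same running-time accounting, including the ten-interval colour-homogeneity precomputation for type 3(c)-ii and the $O(n^3\log n)$ batch solution of the $T(s,t)$ subproblems. It is correct and matches the paper's proof in both structure and detail.
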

	
\subsection{Bichromatic Points on Two Parallel Lines} 
\label{bichromatic-section}
Let $P$ be a set of $n$ points on two parallel lines in the plane such that all points on one line are colored red and all points on the other line are colored blue. We present a top-down dynamic programming algorithm that solves the consistent problem on $P$ in $O(n^4)$ time. By a suitable rotation and reflection we may assume that the lines are horizontal, and the red points lie on the top line. Let $R$ and $B$ denote the set of red and blue points respectively. Let $r_1,\dots,r_{|R|}$ and $b_1,\dots,b_{|B|}$ be the sequences of red points and blue points from left to right, respectively. For each $i\in\{1,\dots |R|\}$ let $R_i$ denote the set $\{r_1,\dots,r_i\}$, and for each $j\in\{1,\dots, |B|\}$ let $B_j$ denote the set $\{b_1,\dots,b_j\}$. For a point $p$ let $\ell_p$ be the vertical line through $p$.

Any optimal solution for this problem contains at least one blue point and one red point. Moreover, the two rightmost points in any optimal solution have distinct colors, because otherwise we could remove the rightmost one and reduce the size of the optimal solution. We solve this problem by guessing the two rightmost points in an optimal solution; in fact we try all pairs $(r_i, b_j)$ where $i\in\{1,\dots |R|\}$ and $j\in\{1,\dots, |B|\}$. For every pair $(r_i, b_j)$ we solve the consistent subset problem on $R_i\cup B_j$ provided that $r_i$ and $r_j$ are in the solution, and no point between the vertical lines $\ell_{r_i}$ and $\ell_{b_j}$ is in the solution (because $r_i$ and $b_j$ are the two rightmost points in the solution). Then, among all pairs $(r_i, b_j)$ we choose one whose corresponding solution is a valid consistent subset for $R\cup B$ and has minimum number of points. The solution corresponding to $(r_i, b_j)$ is a valid consistent subset for $R\cup B$ if for every $x\in\{i+1, \dots,|R|\}$, the point $r_x$ is closer to $r_i$ than to $b_j$, and for every $y\in\{j+1, \dots,|B|\}$, the point $b_y$ is closer to $b_j$ than to $r_i$. To analyze the running time, notice that we guess $O(n^2)$ pairs $(r_i, b_j)$. In the rest of this section we show how to solve the subproblem associated with each pair $(r_i, b_j)$ in $O(n^2)$ time. The validity of the solution corresponding to $(r_i,b_j)$ can be verified in $O(|R|+|B|-i-j)$ time. Therefore, the total running time of our algorithm is $O(n^4)$. 

\begin{figure}[htb]
	\centering
	\setlength{\tabcolsep}{0in}
	$\begin{tabular}{cc}
	\multicolumn{1}{m{.5\columnwidth}}{\centering\includegraphics[width=.37\columnwidth]{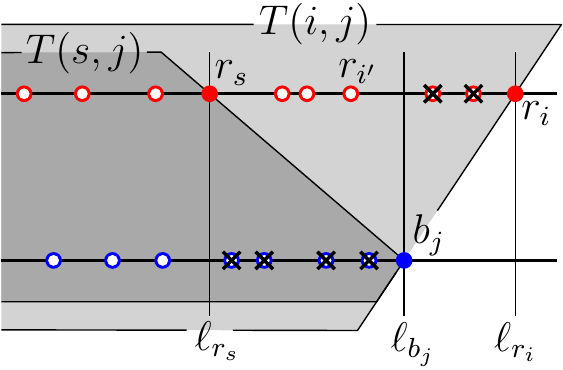}}
	&\multicolumn{1}{m{.5\columnwidth}}{\centering\includegraphics[width=.37\columnwidth]{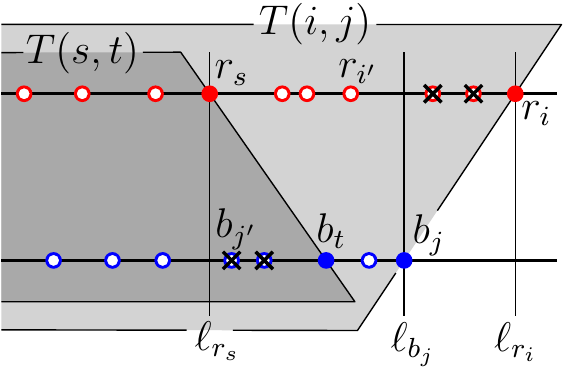}}\\
	(a)&(b)
	\end{tabular}$
	\caption{Illustration of the recursive computation of $T(i,j)$, where (a) $b_j$ is the only blue point in the solution that is to the right of $\ell_{r_s}$, and (b) $b_t$ and $b_j$ are the only two blue points in the solution that are to the right of $\ell_{r_s}$. The crossed points cannot be in the  solution.}
	\label{two-line-fig}
\end{figure}

To solve subproblems associated with pairs $(r_i,b_j)$, we maintain a table $T$ with $|R|\cdot|B|$ entries $T(i,j)$ where $i\in\{1,\dots |R|\}$ and $j\in\{1,\dots, |B|\}$. Each entry $T(i,j)$ represents the number of points in a minimum consistent subset of $R_i\cup B_j$ provided that $r_i$ and $b_j$ are in this subset and no point of $R_i\cup B_j$, that lies between $\ell_{r_i}$ and $\ell_{b_j}$, is in this subset. We use dynamic programming and show how to compute $T(i,j)$ in a recursive fashion. By symmetry we may assume that $r_i$ is to the right of $\ell_{b_j}$. In the following description the term ``solution" refers to an optimal solution associated with $T(i,j)$. Let $r_{i'}$ be the first red point to the left of $\ell_{b_j}$. Observe that if the solution does not contain any red point other than $r_i$, then $\{r_i,b_j\}$ is the solution, i.e., the solution does not contain any other blue point (other than $b_j$) either. Assume that the solution contains some other red points, and let $r_s$, with $s\in\{1,\dots, i'\}$, be the rightmost such point. Let $b_{j'}$ be the first blue point to the right of $\ell_{r_s}$. Now we consider two cases depending on whether or not the solution contains any blue point (other than $b_j$) to the right of $\ell_{r_s}$.

\begin{itemize}
	\item The solution does not contain any other blue point to the right of $\ell_{r_s}$. In this case $T(i,j)=T(s,j) +1$; see Figure~\ref{two-line-fig}(a).
	\item The solution contains some other blue points to the right of $\ell_{r_s}$. Let $b_t$, with $t\in\{j',\dots, j-1\}$, be the rightmost such point. In this case the solution does not contain any blue point that is to the left of $b_t$ and to the right of $\ell_{r_s}$ because otherwise we could remove $b_t$ from the solution. Therefore $T(i,j)=T(s,t)+2$; see Figure~\ref{two-line-fig}(b).
\end{itemize}
Since we do not know $s$ and $t$, we try all possible values and choose one that is {\em valid} and that minimizes $T(i,j)$. Therefore
\[ T(i,j) = \min
\begin{cases}
T(s,j)+1: s\in\{1,\dots, i'\} \text{ and } s \text{ is valid}\\
T(s,t)+2: s\in\{1,\dots, i'\}, t\in\{j',\dots, j-1\} \text{ and } (s,t) \text{ is valid}.
\end{cases}
\]
In the first case, an index $s$ is valid if for every $x\in\{s+1, \dots,i-1\}$ the point $r_x$ is closer to $r_s$ or $r_i$ than to $b_j$. In the second case, a pair $(s,t)$ is valid if for every $x\in\{s+1, \dots,i-1\}$ the point $r_x$ is closer to $r_s$ or $r_i$ than to $b_t$ and $b_j$, and for every $y\in\{t+1, \dots,j-1\}$ the point $b_y$ is closer to $b_t$ or $b_j$ than to $r_s$ and $r_i$. 

To compute $T(i,j)$, we perform $O(n^2)$ look-ups into table $T$, and thus, the time to compute $T(i,j)$ is $O(n^2)$. There is a final issue that we need to address, which is checking the validity of $s$ and $t$ within the same time bound. In the first case we have $O(n)$ look-ups for finding $s$. We can verify the validity of each choice of $s$, in $O(n)$ time, by simply checking the distances of all points in $R'=\{r_{s+1},\dots,r_{i'}\}$ from $r_s$, $r_i$ and $b_j$. Now we consider the second case and describe how to verify, for a fixed $t$, the validity of all pairs $(s,t)$ in $O(n)$ time. First of all observe that in this case, any point $b_y$ with $y\in\{t+1, \dots,j-1\}$, is closer to $b_t$ or $b_j$ than to $r_s$ and $r_i$. Therefore, to check the validity of $(s,t)$ it suffices to consider the points in $R'$. Let $r_{t_1}$ be the first point of $R'$ that is to the left of $\ell_{b_t}$, and let $r_{t_2}$ be the first point of $R'$ that is to the right of $\ell_{b_t}$. Define $r_{j_1}$ and $r_{j_2}$ accordingly but with respect to $\ell_{b_j}$. If there is a point in $R'$ that is closer to $b_t$ than to $r_s$ and $r_i$, then $r_{t_1}$ or $r_{t_2}$ is closer to $b_t$ than to $r_s$ and $r_i$. A similar claim holds for $r_{j_1}$, $r_{j_2}$, and $b_j$. Therefore, to check the validity of $(s,t)$ it suffices to check the distances of $r_{t_1}$, $r_{t_2}$, $r_{j_1}$ and $r_{j_2}$ from the points $r_s$, $r_i$, $b_t$ and $b_j$. This can be done in $O(n)$ time for all $s$ and a fixed $t$. (If any of the points $r_{t_1}$, $r_{t_2}$, $r_{j_1}$ and $r_{j_2}$ is undefined then we do not need to check that point.) The following theorem wraps up this section.

\begin{theorem}
	Let $P$ be a set of $n$ bichromatic points on two parallel lines, such that all points on the same line have the same color. Then, a minimum consistent subset of $P$ can be computed in $O(n^4)$ time.
\end{theorem}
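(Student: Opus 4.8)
The plan is to establish the $O(n^4)$ bound by pairing an outer enumeration of the two rightmost solution points with an inner dynamic program that fills the table $T(\cdot,\cdot)$. First I would set up the outer loop exactly as described: since any optimal solution uses at least one red and one blue point, and its two rightmost points must have distinct colors, I enumerate all $O(n^2)$ pairs $(r_i,b_j)$ as candidate rightmost pairs. For each candidate I solve the constrained subproblem on $R_i\cup B_j$ in which $r_i,b_j$ are forced into the solution and no selected point lies strictly between $\ell_{r_i}$ and $\ell_{b_j}$. Checking that the resulting solution is globally valid for $R\cup B$ amounts to verifying that each red point right of $\ell_{r_i}$ is closer to $r_i$ than to $b_j$ and symmetrically for blue points, which takes $O(|R|+|B|-i-j)=O(n)$ time per pair. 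Thus the correctness and timing of the outer phase reduce entirely to solving each subproblem in $O(n^2)$.

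Next I would justify the recurrence for $T(i,j)$. Assuming by symmetry that $r_i$ lies right of $\ell_{b_j}$, I let $r_s$ be the rightmost red point of the solution other than $r_i$ (if none exists, $\{r_i,b_j\}$ is already the whole solution). The key structural observation is that the choice of $r_s$ as the rightmost interior red point, together with choosing $b_t$ as the rightmost blue point to the right of $\ell_{r_s}$ when one exists, forces an exchange argument: no selected blue point can sit both left of $b_t$ and right of $\ell_{r_s}$, since removing $b_t$ would otherwise strictly shrink the solution. This gives the clean split into $T(i,j)=T(s,j)+1$ (no interior blue point right of $\ell_{r_s}$) or $T(i,j)=T(s,t)+2$ (with $b_t$ present), and the recursion terminates because $(s,j)$ and $(s,t)$ are lexicographically smaller. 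I would verify that the subproblems $T(s,j)$ and $T(s,t)$ inherit exactly the ``two rightmost forced points, nothing between their vertical lines'' invariant, so the recursive structure is self-consistent.

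The main obstacle, and the step I would spend the most care on, is verifying validity of the indices $s$ and $(s,t)$ within the $O(n^2)$ budget per subproblem. Naively there are $O(n)$ choices for $s$ and $O(n^2)$ for $(s,t)$, and each validity check of ``every interior point is closest to one of the four chosen forced points'' seems to cost $O(n)$, which would blow the budget for the pair case. The trick I would exploit is that for a \emph{fixed} $t$, checking validity of all $s$ simultaneously costs only $O(n)$: interior blue points $b_y$ with $y\in\{t+1,\dots,j-1\}$ are automatically closer to $b_t$ or $b_j$ than to any $r_s$ or $r_i$ (since blue points right of $\ell_{r_s}$ and between $b_t,b_j$ sit nearer their own line's chosen points), so only red points in $R'=\{r_{s+1},\dots,r_{i'}\}$ need scrutiny. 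Among those, violations can only be witnessed by the red points immediately flanking $\ell_{b_t}$ and $\ell_{b_j}$ --- call them $r_{t_1},r_{t_2},r_{j_1},r_{j_2}$ --- so it suffices to compare the distances of these four designated points against $r_s,r_i,b_t,b_j$.

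Finally I would assemble the time accounting: each $T(i,j)$ requires $O(n^2)$ table look-ups (from the pair case), and the validity checks add only $O(n)$ per fixed $t$ across all $s$, hence $O(n^2)$ total per entry. With $O(n^2)$ entries the whole table is filled in $O(n^4)$ time, and since the filling dominates the outer enumeration of the $O(n^2)$ candidate rightmost pairs (each adding $O(n)$ for the global validity check), the overall running time is $O(n^4)$, establishing the theorem. I would close by noting that the solution set itself, not merely its cardinality, can be recovered by standard back-pointers stored alongside each $T(i,j)$, so the algorithm returns an actual minimum consistent subset.
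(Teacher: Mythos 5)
Your proposal is correct and follows essentially the same route as the paper's own proof: the $O(n^2)$ enumeration of the two rightmost points $(r_i,b_j)$, the table $T(i,j)$ with the identical recurrence $T(s,j)+1$ / $T(s,t)+2$ via the rightmost interior red point $r_s$ and blue point $b_t$, and the fixed-$t$ validity trick reducing the check to the four flanking witnesses $r_{t_1},r_{t_2},r_{j_1},r_{j_2}$, yielding $O(n^2)$ per entry and $O(n^4)$ overall. There is nothing substantively different or missing relative to the paper's argument.
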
  
	
\section{Point-Cone Incidence}
\label{point-cone-section}
In this section we will prove the following theorem.
\begin{theorem}
	\label{point-cone-thr}
	Let $\mathcal{C}$ be a cone in $\RR^3$ with non-empty interior that is given as the intersection of $n$ halfspaces. Given $n$ translations of $\mathcal{C}$ and a set of $n$ points in $\RR^3$, we can decide in $O(n\log n)$ time whether or not there is a point-cone incidence.
\end{theorem}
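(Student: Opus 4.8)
The plan is to exploit the fact that all $n$ cones are translates of a single fixed cone $\mathcal{C}$, so that deciding point-cone incidence reduces to a fixed-dimensional problem that is amenable to standard computational-geometry machinery. First I would observe that a point $\hat p$ lies in the translate $\mathcal{C} + \tau$ if and only if the apex of that translate lies in the reflected cone $-\mathcal{C} + \hat p$; equivalently, $\hat p \in \mathcal{C}+\tau_i$ precisely when $\hat p$ satisfies each of the $n$ halfspace inequalities defining $\mathcal{C}$ after shifting by $\tau_i$. Because $\mathcal{C}$ is the intersection of $n$ halfspaces each containing the apex on its boundary, the membership test for a single (point, cone) pair is a conjunction of $n$ linear inequalities; the whole problem is to decide whether the bipartite ``satisfies'' relation between $n$ points and $n$ translated cones has at least one true entry, without paying the $\Theta(n^2)$ cost of checking all pairs.

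The key simplification is that $\mathcal{C}$ is a \emph{convex} cone with non-empty interior and a single apex, so it is the intersection of finitely many halfspaces but can be described by its (constant-complexity, after we fix the combinatorial structure) facet normals; what varies across the $n$ instances is only the translation vector. I would set up the incidence test as follows: writing $\mathcal{C} = \{z : \langle a_h, z\rangle \le c_h,\ h=1,\dots,n\}$ with each boundary plane through the origin-shifted apex, the point $\hat p$ lies in $\mathcal{C}+\tau_i$ iff $\langle a_h, \hat p - \tau_i\rangle \le c_h$ for all $h$. The plan is to dualize: fix a point $\hat p$ and ask which translations $\tau_i$ admit it; the set of admissible translation vectors is itself a translate of $\mathcal{C}$ (namely $\hat p - \mathcal{C}$ up to sign), so the problem becomes a point-in-convex-region query where both the query objects (apices $\hat r_i$, equivalently $\tau_i$) and the regions are translates of one fixed convex cone. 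This is exactly the structure handled by batched point-location / halfspace-range machinery in a space of fixed dimension, and the crucial gain is that the \emph{shape} is common to all instances.

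The main technical step, and the step I expect to be the hard part, is turning this common-shape structure into an $O(n\log n)$ algorithm rather than $O(n\,\mathrm{polylog}\,n)$ or worse. I would aim to use a sweep or a divide-and-conquer over one coordinate combined with the fact that a translate of a fixed cone intersected with a sweep hyperplane is a translate of a fixed convex polygon, reducing the three-dimensional incidence to a sequence of two-dimensional stabbing problems; alternatively, one can linearize each of the $n$ facet conditions separately and solve $n$ ``does any translate-halfspace contain a point'' subproblems, each of which is a lower/upper-envelope query solvable in $O(n\log n)$, then intersect the outcomes. The delicate point is that membership requires \emph{all} facet inequalities simultaneously, so simply solving the facet subproblems independently and intersecting does not immediately certify a single pair realizing incidence; I would need to argue that the common translation structure lets the per-facet envelopes be combined (for instance, by processing apices in a sorted order consistent with the cone's facet directions) so that a witness surviving all facets is found in a single near-linear pass.

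The plan for the remaining details is to handle degeneracies using the same perturbation already assumed in the excerpt (no four points cocircular, which keeps the cone's facet structure in general position), and to note that once a single incident pair is detected we can report the corresponding $r\in R$ and $b\in B$, matching the constructive requirement of the earlier sections. I expect the correctness argument to be short once the reduction is in place; the running-time bound is where the care lies, and I would verify it by checking that each of the $O(\log n)$ levels of the divide-and-conquer (or each sweep event) does $O(1)$ amortized work per point after an initial $O(n\log n)$ sort, so that the total is $O(n\log n)$ as claimed.
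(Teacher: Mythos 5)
Your proposal stalls exactly where you predict it will: the step that combines the per-facet conditions is not a finishing detail but the entire theorem, and neither of your two candidate mechanisms closes it. The per-facet route fails for the quantifier reason you yourself flag: for each facet $h$ you can decide in near-linear time whether \emph{some} pair $(\hat p,\tau_i)$ satisfies inequality $h$, but what you need is a single pair satisfying all $n$ inequalities simultaneously, and intersecting the per-facet answers (each answer set can contain $\Theta(n^2)$ pairs, represented only implicitly) yields no witness; the hoped-for ``sorted order consistent with the cone's facet directions'' cannot exist in general, since the $n$ facet normals point in $n$ different directions. Your parenthetical ``constant-complexity'' is also wrong: $\mathcal{C}$ is given as the intersection of $n$ halfspaces, so even one membership test costs $\Theta(n)$ naively, and this is precisely why the $O(n\log n)$ bound needs nontrivial primitives. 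Finally, your sweep alternative rests on a false geometric claim as stated: cross-sections of a translated cone by parallel sweep planes are \emph{homothets} of a fixed polygon (scaled copies, with scale growing with distance from the apex), not translates, so the fixed-shape 2D stabbing structure you invoke is not there.

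The paper's proof supplies the missing idea. After a rigid motion making an interior ray of $\mathcal{C}$ vertical, the lower envelope of the $n$ translated cones is exactly an \emph{additively weighted Voronoi diagram} under the convex distance function whose unit ball is the polygon $M$ obtained by intersecting (half of) $\mathcal{C}$ with the plane $z=1$, the additive weights being the vertical components $c_i$ of the translations; deciding incidence then reduces to locating each query point in this diagram and testing it against a single candidate cone. The diagram is built by adapting the sweep of McAllister, Kirkpatrick and Snoeyink, using two devices your outline lacks: (i) $\mathcal{C}$ is split into two half-cones $\mathcal{C}'$ and $\mathcal{C}''$ whose translates project onto halfplanes $x\geqslant a_i$, which forces the sweep line and the beach line to coincide (sites not yet swept cannot influence the diagram behind the line), removing the usual front-maintenance difficulty; and (ii) since the metric polygon has $n$ vertices, each Voronoi-vertex computation is brought down to $O(\log n)$ --- rather than the $O(\log n\log m)$ of the general setting --- by precomputing, for each outer normal of $M$, the extremal vertex of $-M$, and running Kirkpatrick--Snoeyink tentative prune-and-search at $O(1)$ cost per step. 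With a linear number of site, vertex, and query events at $O(\log n)$ each, the total is $O(n\log n)$. Without analogues of (i) and (ii), your sketch has no concrete route to the claimed bound.
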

	
We first provide an overview of the approach and its key ingredients. Let $\mathcal{C}_1,\dots ,\mathcal{C}_n$ be the $n$ cones that are translations of $\mathcal{C}$, and let $P$ denote the set of $n$ input points
that we want to check their incidence with these cones. Consider a direction $d$ such that $\mathcal{C}$ contains an infinite ray from its apex in direction $d$.
After a transformation, we may assume that $d$ is vertically upward.
Consider the lower envelope of the $n$ cones; we want to decide whether there is a point of $P$ above this 
lower envelope; see Figure~\ref{fig:domain}(a). To that end, first we find for every point $p\in P$, the cone $\mathcal{C}_i$ for which the vertical line through $p$ intersects the lower envelope at $\mathcal{C}_i$. Then, we check whether or not $p$ lies above $\mathcal{C}_i$.  

Since all the cones are translations of a common cone,
their lower envelope can be interpreted as a Voronoi diagram with respect to 
a distance function defined by a convex polygon obtaining by intersecting $\mathcal{C}$ with a horizontal plane. Furthermore, in this interpretation, the sites have additive weights that correspond to the vertical shifts in the translations of the cones. Therefore, the lower envelope of the cones can be interpreted as an additively-weighted Voronoi diagram with respect to a convex distance function; the {\em sites} of such diagram are the projections of the apices of the cones into the plane. In order to find the cone (on the lower envelope), that is intersected by the vertical line through $p$, it suffices to locate $p$ in such a Voronoi diagram, i.e., to find $p$'s closest site. 

We adapt the sweep-line approach used by McAllister, Kirkpatrick and Snoeyink~\cite{McAllister1996}
for computing compact Voronoi diagrams for disjoint convex regions with respect to a convex metric.
In a compact Voronoi diagram, one has a linear-size partition of the plane into cells, where each cell
has two possible candidates to be the closest site. Such a structure is enough
to find the closest site to every point $p$: first we locate $p$ in this partition to identify
the cell that contains $p$, and then we compute the distance of $p$ to the two
candidate sites of that cell to find the one that is closer to $p$. The complexity of such a compact Voronoi diagram, in the worst case, is smaller than the complexity of the traditional Voronoi diagram.
Now, we describe our adaption, which involves some modifications of the approach of McAllister \etal \cite{McAllister1996}. Here are the key differences encountered in our adaptation:
\begin{itemize}
	\item The additive weights on the sites can be interpreted as regions defined by 
	convex polygons, but then they are not necessarily disjoint (as required in \cite{McAllister1996}). 
	\item In our case, the Voronoi vertices can be computed faster because the metric and the site regions (encoding the weights) are defined by the same polygon. 
	\item By splitting $\mathcal{C}$ into two cones that have direction $d$ on their boundaries, 
	we can assume that the sweep line and the front line (also referred to as the beach line) coincide; this makes the computation of the Voronoi diagram easier. 
	\item Since the query points (the points of $P$) are already known, we do not need to make a data structure
	for point location or to construct the compact Voronoi diagram
	explicitly. It suffices to make point location on the 
	front line (which is the sweep line in our case) when it passes over a point of $P$.
\end{itemize}

Notice that some of the cones can be contained in some other, and thus, do not appear on the lower envelope of the cones. Bhattacharya et al. \cite{Bhattacharya2010} claimed a randomized algorithm to find, in $O(n\log n)$ expected time, the apices of the cones that appear in the lower envelope of the cones. They discussed a randomized incremental construction, which
is also an adaptation of another algorithm also presented by 
McAllister, Kirkpatrick and Snoeyink~\cite{McAllister1996}. 
Nevertheless, a number of aspects in the construction of \cite{Bhattacharya2010} are not clear.
Our approach is deterministic, and also solves their problem in $O(n \log n)$ worst-case time.

Now we provide the details of our adapted approach.
Consider the cone $\mathcal{C}$ and let $a$ be its apex. 
Let $r$ be a ray emanating from $a$ in the interior of $\mathcal{C}$ such that the plane $\pi$, 
that is orthogonal to $r$ at $a$, intersects $\mathcal{C}$ only in $a$.
We make a rigid motion where the apex of $\mathcal{C}$ becomes the origin,
$r$ becomes vertical, and $\pi$ becomes the horizontal plane defined by $z=0$.
The ray $r$, the plane $\pi$, and the geometric transformation can be computed in 
$O(n)$ time using linear programming in fixed dimension~\cite{Megiddo1984}.
From now on, we will assume that the input is actually given after the transformation.
Let $\mathcal{C}'$ be the intersection of $\mathcal{C}$ with the halfspace $x\geqslant 0$,
and let $\mathcal{C}''$ be the intersection of $\mathcal{C}$ with the halfspace $x\leqslant 0$.
See Figure~\ref{fig:domain}(a). Since we took $r$ in the interior of $\mathcal{C}$, both $\mathcal{C}'$ and $\mathcal{C}''$
have nonempty interiors.

Let $\mathcal{C}_1,\dots ,\mathcal{C}_n$ be the cones after the above transformation. 
For each $i$, let $(a_i,b_i,c_i)$ be the apex of $\mathcal{C}_i$.
Recall that, by assumption, each cone $\mathcal{C}_i$ is the translation of $\mathcal{C}$
that brings $(0,0,0)$ to $(a_i,b_i,c_i)$.
We split each cone $\mathcal{C}_i$ into two cones, denoted $\mathcal{C}'_i$ and $\mathcal{C}''_i$,
using the plane $x=a_i$. 
Notice that $\mathcal{C}'_1,\dots,\mathcal{C}'_n$ are translations of $\mathcal{C}'$,
and $\mathcal{C}''_1,\dots,\mathcal{C}''_n$ are translations of $\mathcal{C}''$.
We split the problem into two subproblems, in one of them want to find a point-cone incidence
between $P$ and $\mathcal{C}'_1,\dots,\mathcal{C}'_n$, and in the other we want to
find a point-cone incidence between $P$ and $\mathcal{C}''_1,\dots,\mathcal{C}''_n$.
Any point-cone incidence $(p,\mathcal{C}'_i)$ or $(p,\mathcal{C}''_i)$ corresponds
to a point-cone incidence $(p,\mathcal{C}_i)$, and vice versa.
We explain how to solve the point cone incidence for $P$ and cones $\mathcal{C}'_i$; 
the incidence for cones $\mathcal{C}''_i$ is similar. 

Recall that the origin is the apex of $\mathcal{C}'$. 
We define $M$ to be the polygon obtained by intersecting
$\mathcal{C}'$ with the horizontal plane $z=1$. 
Note that $M$ lies on the halfspace $x\ge 0$.
Our choice of $r$ in the interior of $\mathcal{C}$ implies 
that $M$ has a nonempty interior and $(0,0,1)$ lies on the (relative) interior of a boundary edge of $M$. 
See Figure~\ref{fig:domain}(a). Since $M$ is a convex polygon that is the intersection of $n$ halfplanes with the  plane $z=1$, it can be computed in $O(n\log n)$ time.
In the rest of description, we consider $M$ being in $\mathbb{R}^2$, where we just drop the $z$-coordinate as in Figure~\ref{fig:domain}(b).

\begin{figure}[tb]
	\centering
	\setlength{\tabcolsep}{0in}
	$\begin{tabular}{cc}
	\multicolumn{1}{m{.62\columnwidth}}{\centering\includegraphics[width=.61\columnwidth]{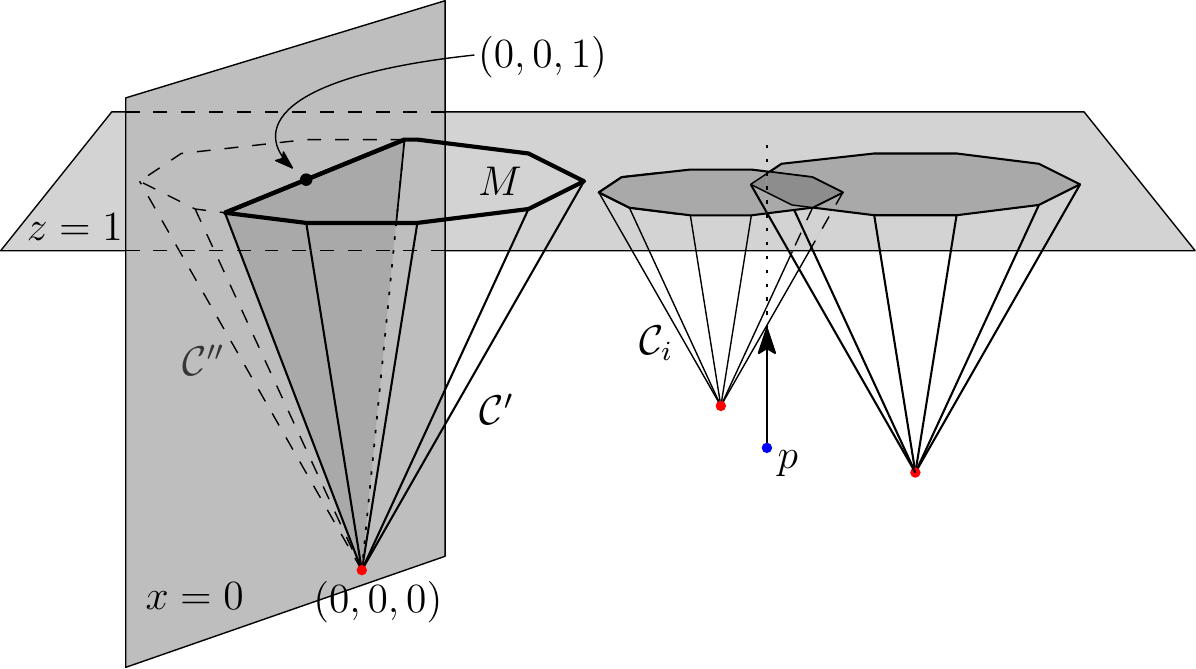}}
	&\multicolumn{1}{m{.38\columnwidth}}{\centering\includegraphics[width=.37\columnwidth]{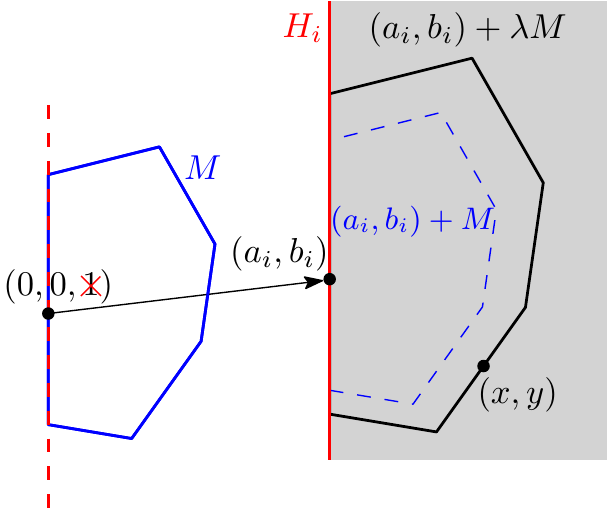}}\\
	(a)&(b)
	\end{tabular}$
	\caption{(a) The cone $\mathcal{C}$ that is splitted to $\mathcal{C}'$ and $\mathcal{C}''$, and the polygon $M$ which is the intersection of the plane $z=1$ with $\mathcal{C}'$. (b) The domain $H_i$, and the translation	of $M$ that brings $(0,0)$ to $(a_i,b_i)$ followed by a scale with factor $\lambda$.}
	\label{fig:domain}
\end{figure}

Let $H_i$ denote the projection of $\mathcal{C}'_i$ on the $xy$-plane. Note that $H_i$ is the halfplane defined by $x\ge a_i$ because we took $r$ 
in the interior of $\mathcal{C}$. 
See Figure~\ref{fig:domain}.
Let $H$ be the union of all halfplanes $H_i$, and note that $H$ is defined by $x\geqslant \min \{ a_1,\dots, a_n\}$.

The boundary of every cone $\mathcal{C}'_i$ can be interpreted as a function $f_i\colon H_i\rightarrow \mathbb{R}$ where $f_i(x,y)= \min \{ \lambda \in \mathbb{R}_{\ge 0} \mid (x,y,\lambda)\in \mathcal{C}'_i \}$. 
Alternatively, for every $(x,y)\in H_i\setminus \{ (a_i,b_i)\}$ we have 
\[
f_i(x,y)=c_i+\min \{\lambda\geqslant 0\mid (x,y)\in (a_i,b_i)+\lambda M\} = 
c_i+\min \{\lambda>0\mid \frac{(x,y)-(a_i,b_i)}{\lambda} \in M\},
\]
where $\lambda$ is the smallest amount that $M$ must be scaled, after a translation to $(a_i,b_i)$, to include $(x,y)$; see Figure~\ref{fig:domain}(b).
Note that if $\mathcal{C}'_i$ contains a point $(x,y,z)$, it also contains $(x,y,z')$ for all $z'\geqslant z$.
Therefore, the surface $\{ (x,y,f_i(x,y)\mid (x,y)\in H_i \}$ precisely defines
the boundary of $\mathcal{C}'_i$. Based on this, to decide whether a point $(x,y,z)$ lies in $\mathcal{C}'_i$, 
it suffices to check whether $(x,y)\in H_i$ and $z\geqslant f_i(x,y)$. Notice that every $f_i$ is a convex function on domain $H_i$.
We extend the domain of each $f_i$ to $H$ by setting $f(x,y)=\infty$ 
for all $(x,y)\in H\setminus H_i$. In this way all the functions $f_i$ are defined 
in the same domain $H$.

Let us denote by $F$ the family of functions $\{ f_1,\dots,f_n \}$,
and define the pointwise minimization function $f_{\min}(x,y)=\min \{ f_1(x,y),\dots,f_n(x,y) \}$ for every $(x,y)\in\RR^2$.
For simplicity, we assume that the surfaces defined by $F$ are in \emph{general position} in the sense that, before extending the domains of the functions $f_i$, the following conditions hold:
(i) no apex of a cone lies on the boundary of another cone, that is, 
{$f_j(a_i,b_i)\neq c_i$ for all $i\neq j$,
(ii) any three surfaces defined by $F$ have a finite number of points in common, and
(iii) no four surfaces defined by $F$ have a common point. 
Such assumptions can be enforced using infinitesimal perturbations.

For each $i$, let $\mathcal{R}_i$ be the subset of $H$ where $f_i$ gives the minimum among all functions in $F$, that is
\[
\mathcal{R}_i = \{ (x,y)\in H\mid f_i(x,y)=f_{\min}(x,y)\}.
\]
This introduces a partition the plane into regions $\mathcal{R}_i$; we will refer to this partition by {\em minimization diagram}. We note that $\mathcal{R}_i$ can be empty; this occurs when the apex $(a_i,b_i,c_i)$ of $\mathcal{C}_i$ is contained in 
the interior of some cone $\mathcal{C}'_j$, with $j\neq i$.
We show some folklore properties of the regions $\{\mathcal{R}_1,\dots,\mathcal{R}_n\}$. In our following description, $[n]$ denotes the set $\{1,2,\dots, n\}$.

\begin{lemma}
	\label{le:starshaped}
	For any index $i\in [n]$, the region $\mathcal{R}_i$ is star-shaped with respect to the point $(a_i,b_i)$.
	For any three distinct indices $i,j,k\in [n]$, the intersection $\mathcal{R}_i\cap \mathcal{R}_j\cap \mathcal{R}_k$
	contains at most two points. 
	For any two distinct indices $i,j\in [n]$ and every line $\ell$ in $\mathbb{R}^2$ where $(a_i, b_i)$ and $(a_j,b_j)$ lie on the same side of $\ell$, the intersection $\ell\cap \mathcal{R}_i\cap \mathcal{R}_j$ contains at most two points. 
\end{lemma}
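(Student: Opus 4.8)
The plan is to reduce all three assertions to convexity of the functions $f_i$ plus a single elementary unimodality lemma about homothetic copies of a convex curve. Throughout I write $p_i=(a_i,b_i)$ for the projected apex and $g$ for the gauge of $M$, so that $f_i(x,y)=c_i+g((x,y)-p_i)$; here $g$ is sublinear, its subgradients $\nabla g$ lie in the polar $M^{\circ}$, and $g(e)=1$ for every boundary (unit) direction $e\in\partial M$. Setting $\beta_{ij}=f_i-f_j$, the bisector is $B_{ij}=\{\beta_{ij}=0\}$ and $\mathcal{R}_i=\bigcap_{j\ne i}\{\beta_{ij}\le 0\}\cap H_i$.

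For the \emph{star-shapedness} claim I would fix a ray from $p_i$ in a gauge-unit direction $e$, write its points as $p_i+\tau e$ with $\tau\ge 0$, and differentiate along it. By homogeneity $f_i(p_i+\tau e)=c_i+\tau$ is affine with slope $1$, whereas $\tfrac{d}{d\tau}f_j(p_i+\tau e)=\langle\nabla g(\tau e-(p_j-p_i)),e\rangle\le \max_{u\in M^{\circ}}\langle u,e\rangle=g(e)=1$. Hence $\tfrac{d}{d\tau}\beta_{ij}=1-\langle\nabla g,e\rangle\ge 0$, so $\beta_{ij}$ is non-decreasing along every ray from $p_i$. Therefore each constraint $\{\beta_{ij}\le 0\}$ meets such a ray in an initial segment and is star-shaped with respect to $p_i$; since $H_i$ is convex (hence star-shaped about $p_i$) and an intersection of sets star-shaped about a common point is star-shaped, the claim follows. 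The same monotonicity shows that when $\mathcal{R}_i\ne\emptyset$ one has $p_i\in\mathcal{R}_i$, so the rays are anchored correctly.

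For the \emph{line} claim the same-side hypothesis is exactly what I would exploit. Rotating so that $\ell=\{y=0\}$ with $b_i,b_j>0$, the restriction $\phi_l:=f_l|_\ell$ becomes $\phi_l(s)=c_l+g(s-a_l,-b_l)=c_l+b_l\,G\!\big(\tfrac{s-a_l}{b_l}\big)$ with $G(\xi):=g(\xi,-1)$ a single convex piecewise-linear function; the common sign of the second coordinate $-b_l<0$ is precisely what lets me factor out the \emph{same} $G$ for both indices. Thus the graphs of $\phi_i$ and $\phi_j$ are homothetic images (uniform scalings by $b_i,b_j>0$ plus translations, which preserve slopes) of the graph of $G$. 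The core step is then a one-dimensional lemma: two homothetic copies of a convex curve cross at most twice. I would prove it by comparing $G$ with $s\mapsto\mu\,G(s/\mu)$ ($\mu=b_j/b_i$) after moving the homothety center to the origin, and noting that $D(s)=G(s)-\mu G(s/\mu)$ has $D'(s)=G'(s)-G'(s/\mu)$, which is $\ge 0$ for $s>0$ and $\le 0$ for $s<0$ (or the reverse, depending on whether $\mu\gtrless 1$) by monotonicity of $G'$; hence $D$ is unimodal and has at most two zeros. Since $\ell\cap\mathcal{R}_i\cap\mathcal{R}_j\subseteq\{\phi_i=\phi_j\}$, this gives the bound.

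For the \emph{triple-point} claim I would pass to level sets: $\{f_l=h\}=\partial\big(p_l+(h-c_l)M\big)$ is the boundary of a positive homothet of $M$, so a triple point at height $h$ is a common boundary point of three homothets of $M$. Invoking the planar fact that the boundaries of two positively homothetic convex bodies meet in at most two points (the unimodality lemma again, now in radial form about the homothety center), each pairwise bisector $B_{ij}$ meets every level $\{f=h\}$ in at most two points, while by the radial monotonicity proved above $B_{ij}$ meets every ray from $p_i$ (and from $p_j$) at most once. I expect reconciling these across levels to be the main obstacle: a single height bounds one bisector's points, but distinct triple points generally lie at distinct heights, so the per-level bound must be upgraded to a bound on $|B_{ij}\cap B_{ik}|$. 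The plan is to view both bisectors as radial graphs $r=\eta_{ij}(\theta)$ and $r=\eta_{ik}(\theta)$ about $p_i$ and to control the sign of $\eta_{ij}-\eta_{ik}$ through a monotone sweep in $h$ of the three expanding homothets, forcing at most two sign changes; alternatively one may observe that the $f_i$ define an admissible convex-distance, additively weighted Voronoi system, for which three regions are known to meet in at most two points. Proving that final sign-change bound directly, without appealing to the general Voronoi machinery, is the step I would treat most carefully.
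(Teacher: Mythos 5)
Your treatments of the first and third assertions are sound and in fact more self-contained than the paper's: for star-shapedness the paper simply invokes Corollary~2.5 of McAllister, Kirkpatrick and Snoeyink, whereas you derive it directly from sublinearity of the gauge (the derivative bound $\tfrac{d}{d\tau}f_j(p_i+\tau e)\le g(e)=\tfrac{d}{d\tau}f_i(p_i+\tau e)$ along rays from $(a_i,b_i)$); just phrase the derivative as a one-sided directional derivative, since $g$ is piecewise linear and, because $(0,0)$ lies on the \emph{boundary} of $M$, takes the value $+\infty$ outside a halfplane of directions. Your homothety/unimodality argument for the line claim is likewise correct when $b_i,b_j>0$ strictly, which is what the same-side hypothesis gives after rotation.

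The second assertion, however --- that $\mathcal{R}_i\cap\mathcal{R}_j\cap\mathcal{R}_k$ has at most two points --- is left genuinely unproven, as you acknowledge, and the level-set plan has a concrete obstruction beyond the one you flag: your monotonicity along rays from $(a_i,b_i)$ is only weak ($\beta_{ij}$ nondecreasing, with equality possible on whole cones of directions for a polygonal $M$, namely wherever the ray direction attains the maximal linear growth rate of $g$), so the bisector can meet a ray from $(a_i,b_i)$ in a nondegenerate segment and need not be a radial graph $r=\eta_{ij}(\theta)$; the proposed sign-change sweep in $h$ therefore does not get off the ground without substantial extra work. The paper dispatches both counting claims with one topological stroke, and this is the idea you are missing: given three points of $\mathcal{R}_i\cap\mathcal{R}_j\cap\mathcal{R}_k$, connect each of them by a straight segment to each of $(a_i,b_i)$, $(a_j,b_j)$, $(a_k,b_k)$; by the star-shapedness you already established, each segment lies inside the corresponding region, and since the regions have disjoint interiors this yields a planar drawing of $K_{3,3}$, a contradiction. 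The same device reproves your line claim (connect three points of $\ell\cap\mathcal{R}_i\cap\mathcal{R}_j$ to the two apices and to an arbitrary point on the side of $\ell$ not containing them). To complete your writeup, either import this $K_{3,3}$ argument for the triple-point claim or actually prove the deferred sign-change lemma; the former is both shorter and immune to the non-strictness issue above.
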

\begin{proof}
	To verify $\mathcal{R}_i$ being star-shaped with respect to $(a_i,b_i)$, 
	the proof of~\cite[Corollary~2.5]{McAllister1996} applies.
	To verify the second claim, notice that if $\mathcal{R}_i\cap \mathcal{R}_j\cap \mathcal{R}_k$ have three or more points, then by connecting those points to $(a_i,b_i)$, $(a_j,b_j)$ and $(a_k,b_k)$ with line segments, we would get a planar drawing of the graph $K_{3,3}$, which is impossible. To verify the third claim, note that if for some line $\ell $ the intersection $\ell\cap \mathcal{R}_i\cap \mathcal{R}_j$ have three or more points, then again we would get an impossible planar drawing of $K_{3,3}$ as follows: we connect those points to $(a_i,b_i)$, $(a_j,b_j)$, and to an arbitrary point to the side of $\ell$ that does not contain $(a_i,b_i)$ and $(a_j,b_j)$. 
\end{proof}

\begin{lemma}
	\label{le:basic-operations}
	After $O(n \log n)$ preprocessing time on $M$ we can solve the following problems in $O(\log n)$ time:
	\begin{itemize}
		\item Given a point $p$ and an index $i\in [n]$, decide whether or not $p\in \mathcal{C}_i$.
		\item Given three distinct indices $i,j,k\in [n]$, compute $\mathcal{R}_i\cap \mathcal{R}_j\cap \mathcal{R}_k$.
		\item Given two distinct indices $i,j\in [n]$ and a vertical line $\ell$ in $\mathbb{R}^2$,
		compute $\ell\cap \mathcal{R}_i\cap \mathcal{R}_j$.
	\end{itemize}
\end{lemma}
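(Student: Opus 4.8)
The plan is to reduce all three queries to a single preprocessing of the convex polygon $M$ that supports, in $O(\log n)$ time, evaluation of the gauge (convex distance) function of $M$ and location of the edge of $M$ crossed by a given direction. Concretely, I would first compute $M$ — an intersection of $n$ halfplanes, hence a convex polygon with $O(n)$ edges — in $O(n\log n)$ time, and then sort its edges by the angle of the ray emanating from the origin (which lies on $\partial M$) through them, storing them in an array for binary search. Since every function in $F$ is a translated copy of the gauge of $M$, namely $f_i(x,y)=c_i+\gamma_M\big((x,y)-(a_i,b_i)\big)$ with $\gamma_M(v)=\min\{\lambda\ge 0: v\in\lambda M\}$, a single angular binary search locates the edge of $M$ hit by the ray in direction $(x,y)-(a_i,b_i)$ and returns $\gamma_M$ in $O(\log n)$; thus any $f_i$, together with the linear facet it lies on, can be evaluated in $O(\log n)$ time. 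This is the only primitive the rest of the proof uses.

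For the first operation, after translating $p=(x,y,z)$ by $-(a_i,b_i,c_i)$ the test $p\in\mathcal{C}_i$ becomes membership of the translated point in the fixed cone, i.e.\ $z\ge \gamma_M(x,y)$ on the relevant side (symmetrically for $\mathcal{C}''$); this is one gauge evaluation and one comparison, hence $O(\log n)$.

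For the third operation I would exploit that, on a vertical line $\ell:\{x=x_0\}$ on which both $f_i$ and $f_j$ are finite (so $a_i,a_j\le x_0$, putting both sites on the same side of $\ell$ as required by Lemma~\ref{le:starshaped}), each restriction $\phi(y)=f_i(x_0,y)$ and $\psi(y)=f_j(x_0,y)$ is a convex, unimodal piecewise-linear function that I can evaluate, with its slope, in $O(\log n)$. Letting $y_i^\ast\le y_j^\ast$ be their minimizers, on the middle interval $[y_i^\ast,y_j^\ast]$ the difference $\phi-\psi$ is strictly increasing, so a binary search evaluating $\phi-\psi$ and its slope isolates its unique zero there in $O(\log n)$; the two outer intervals are handled by the same device together with the at-most-two bound of Lemma~\ref{le:starshaped}, which caps the number of crossings to report. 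The second operation is analogous but replaces the line by a third surface: a point of $\mathcal{R}_i\cap\mathcal{R}_j\cap\mathcal{R}_k$ is a $u$ with $f_i(u)=f_j(u)=f_k(u)$, equivalently a value $t$ and a point lying at gauge-distances $t-c_i,t-c_j,t-c_k$ from the three sites simultaneously. Parametrizing the bisector $\beta_{ij}=\{f_i=f_j\}$ by the common value $t$ and tracking $\Delta(t)=f_k(\beta_{ij}(t))-t$ along each of its two branches, the convexity of the three cones should make $\Delta$ monotone, so a binary search that evaluates $\beta_{ij}(t)$ and $f_k$ there in $O(\log n)$ per probe isolates the at-most-two solutions guaranteed by Lemma~\ref{le:starshaped}.

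The main obstacle is the second operation (and the outer intervals of the third): I must run these binary searches without ever materializing the bisectors, which are piecewise linear with $\Theta(n)$ breakpoints. This forces me to (i) prove the relevant monotonicity — that $\phi-\psi$ on the middle interval and $\Delta(t)$ along each bisector branch are genuinely monotone, which is exactly where the convexity of the cone surfaces and the additive-weight structure (all sites sharing the same polygon $M$) are essential — and (ii) evaluate the bisector point $\beta_{ij}(t)$ at a probed value $t$ in $O(\log n)$ by a coordinated pair of edge-locations in $M$, so that each probe costs only $O(\log n)$. Establishing this monotonicity together with the single-probe cost, adapting the analysis of McAllister, Kirkpatrick and Snoeyink~\cite{McAllister1996} to equal-metric additively-weighted sites whose weight regions need not be disjoint, is the crux; once it is in place the $O(\log n)$ bounds for all three operations follow immediately.
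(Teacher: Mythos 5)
Your preprocessing and your handling of the first bullet coincide with the paper's proof: compute $M$ in $O(n\log n)$ time, store its edges in an angularly ordered array, and evaluate the gauge $f_i$ by one binary search for the edge hit by the ray from $(a_i,b_i)$. The gap is in the second and third bullets, and it is exactly the part you flag as the crux but do not supply. The monotonicity your searches rest on is not just unproven; the central instance of it is false in general. Along a single branch of the bisector $\{f_i=f_j\}$ parametrized by the common value $t$, the function $\Delta(t)=f_k\bigl(\beta_{ij}(t)\bigr)-t$ need not be monotone: Lemma~\ref{le:starshaped} bounds $\mathcal{R}_i\cap\mathcal{R}_j\cap\mathcal{R}_k$ by two points but nothing forces the two solutions onto different branches, and if both lie on one branch then $\Delta$ crosses zero twice there and a single binary search cannot isolate either crossing. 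The same issue defeats the ``outer intervals'' of your vertical-line search: there $\phi$ and $\psi$ are both increasing (or both decreasing) convex piecewise-linear functions, so $\phi-\psi$ has slope equal to a difference of two nondecreasing step functions and can change direction at every breakpoint; ``the same device'' does not apply, and the at-most-two bound caps the number of zeros without giving a way to find them. Finally, even where monotonicity does hold (your middle interval), a binary search must be discretized over the $\Theta(n)$ combinatorial breakpoints with an $O(\log n)$-cost probe at each step, which as accounted in your own proposal gives $O(\log^2 n)$, not the claimed $O(\log n)$.

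The paper's proof avoids all of this by a different reduction. Taking $c_1=\max\{c_1,c_2,c_3\}$, a point of $\mathcal{R}_1\cap\mathcal{R}_2\cap\mathcal{R}_3$ is a center $(x,y)$ for which some copy $(x,y)+\lambda(-M)$ is simultaneously tangent to the degenerate polygon $P_1=\{(a_1,b_1)\}$ and the polygons $P_2=(a_2,b_2)+(c_1-c_2)M$ and $P_3=(a_3,b_3)+(c_1-c_3)M$; such tritangent copies are computed by the tentative prune-and-search of Kirkpatrick and Snoeyink~\cite{Kirkpatrick1995}, as used in~\cite[Lemma~3.15]{McAllister1996}, which requires no monotonicity along bisectors at all. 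Each of its $O(\log n)$ steps is made $O(1)$ by precomputing, for the outer normal $\overrightarrow{n_v}$ of every vertex $v$ of $M$, the extremal vertex of $-M$ in that direction --- the ingredient missing from your preprocessing, and the reason the paper attains $O(\log n)$ total rather than $O(\log^2 n)$. The obstacle you correctly identify --- that the weight regions need not be disjoint, as~\cite{McAllister1996} requires --- is resolved in the paper by explicit surgery rather than by bypassing it: the boundaries of $P_2$ and $P_3$ cross in at most two points $q_1,q_2$, computable in $O(\log n)$ time, and cutting $P_2\cap P_3$ along the chord $q_1q_2$ yields interior-disjoint convex pieces $P'_2\subset P_2$ and $P'_3\subset P_3$ (represented implicitly, so binary search on their boundaries still works) to which prune-and-search applies; the third bullet is then the same computation with the vertical line $\ell$ treated as a degenerate polygon. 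Since your proposal supplies neither a proof of the monotonicity it needs nor a substitute for this disjointness step, it does not establish the lemma as written.
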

\begin{proof}
	We compute $M$ explicitly in $O(n \log n)$ time
	and store its vertices and edges cyclically ordered in an array. Let $-M$ denote $\{(-x,-y)\mid (x,y)\in M\}$.
	For each vertex $v$ of $M$ we choose an outer normal $\overrightarrow{n_v}$ vector.
	We also store for each vertex $v$ of $M$  
	the vertex of $-M$ that is extremal in the direction $\overrightarrow{n_v}$.
	Having $M$, this can be done in linear time by walking
	through the boundaries of $M$ and $-M$ simultaneously.
	This finishes the preprocessing.
	
	For the first claim, we are given an index $i$ and a point $p=(p_x,p_y,p_z)$. By performing binary search on the edges of $M$ we can find the edge that intersects the ray with direction $(p_x,p_y)-(a_i,b_i)$ in $O(\log n)$ time. This edge determines
	the value $\min \{\lambda\ge 0\mid (p_x,p_y)\in (a_i,b_i)+\lambda M\}$, which in turn gives 
	$f_i(p_x,p_y)$. By comparing $f_i(p_x,p_y)$ with $p_z$ we can decide whether or not $p\in \mathcal{C}_i$.
	
	Now we prove the second claim. By Lemma~\ref{le:starshaped}, $\mathcal{R}_i\cap \mathcal{R}_j\cap \mathcal{R}_k$
	contains at most two points. Assume, without loss of generality,
	that $i=1$, $j=2$, $k=3$ and $c_1=\max \{ c_1,c_2,c_3\}$.
	Let $P_1$ be the (degenerate) polygon with a single vertex $(a_1,b_1)$.
	Let $P_2$ and $P_3$ be the convex polygons $(a_2,b_2)+(c_1-c_2)M$ and $(a_3,b_3)+(c_1-c_3)M$, respectively. The polygons $P_2$ and $P_3$ might also be single points if $c_2=c_1$ and $c_3=c_1$.
	A point $(x,y)$ belongs to $\mathcal{R}_1\cap \mathcal{R}_2\cap \mathcal{R}_3$
	if and only if for some $\lambda$ the polygon $(x,y)+\lambda(-M)$ is tangent
	to $P_1$, $P_2$ and $P_3$. 
	
	If there is some containment between the polygons $P_1$, $P_2$ and $P_3$, i.e., one polygon is totally contained in other polygon, 
	then $\mathcal{R}_i\cap \mathcal{R}_j\cap \mathcal{R}_k$ is empty. Assume that there is no containment between these polygons. Now we are going to find two convex polygons $P'_2$ and $P'_3$ such that $P_1$, $P'_2$, $P'_3$ are pairwise interior disjoint. 
	If $P_2$ and $P_3$ are interior disjoint, we take $P'_2=P_2$ and $P'_3=P_3$.
	Otherwise, the boundaries of $P_2$ and $P_3$ intersect at most twice because
	they are convex polygons. We compute the intersections $q_1,q_2$ between the boundary
	of $P_2$ and $P_3$ in $O(\log n)$ time.
	We use the the segment $q_1q_2$ to cut $P_2\cap P_3$ so that we obtain
	two interior disjoint convex polygons $P'_2\subset P_2$ and $P'_3\subset P_3$
	with $P'_2\cup P'_3= P_2\cup P_3$.
	The polygon $P'_2$ is described implicitly by the segment $q_1q_2$
	and the interval of indices of $M$ that describe the portion of $P_2$ between $q_1$ and $q_2$.
	The description of $P'_3$ is similar. With this description, we can perform 
	binary search on the boundaries of $P'_2$ and $P'_3$.
	
	Now we want to find the (at most two) scaled copies of $-M$ that can be translated
	to touch $P_1$, $P'_2$ and $P'_3$. Since the polygons are disjoint, we can use
	the tentative prune-and-search technique of 
	Kirkpatrick and Snoeyink~\cite{Kirkpatrick1995}
	as used in~\cite[Lemma~3.15]{McAllister1996}. The procedure makes $O(\log n)$ steps,
	where in each step we locate the extreme point of $-M$ in the direction $\overrightarrow{n_v}$ for some vertex $v$ of $P'_i$. Since such
	vertices are precomputed, we spend $O(1)$ time in each of
	the $O(\log n)$ steps used by the tentative prune-and-search.\footnote{
		The running time in~\cite[Lemma~3.15]{McAllister1996} has an extra logarithmic factor
		because they spend $O(\log m)$ time to find the extremal vertex in a polygon $M$
		with $m$ vertices.} 
	
	The proof of the third claim is similar to that of previous claim, where we treat $\ell$ as a degenerate polygon.
\end{proof}

Let $A$ be the set of points $\{ (a_1,b_1),\dots (a_n,b_n)\}$ defined by the apices of the cones.
We use the sweep-line algorithm of~\cite{McAllister1996} to compute a representation of the
minimization diagram. More precisely, we sweep $H$ 
with a vertical line $\ell\equiv \{ (x,y)\mid x=t\}$, where $t$ goes from $-\infty$ to $+\infty$. 
In our case, the sweep line and the sweep front (the beach line) are the same because
future points of $A$ do not affect the current minimization diagram.
During the sweep, we maintain (in a binary search tree) the intersection
of $\ell$ with the regions $\mathcal{R}_i$, sorted as they occur along the line $\ell$, possibly with repetitions.

There are two types of events. A {\em vertex event} (or {\em circle event}) 
occurs when the sweep front goes over 
a point of $\mathcal{R}_i\cap \mathcal{R}_j\cap \mathcal{R}_k$. In our case, this is when $\ell$
goes over such a point. A \emph{site event} occurs when the sweep line $\ell$ (and thus the sweep front)
goes over a point of $A$.
The total number of these events is linear.

Vertex events will be handled in the same way as in McAllister \etal \cite{McAllister1996}. We describe how to handle site events.
At a site event, we locate the point $(a_i,b_i)$ in the current region $\mathcal{R}_j$
that contains it, as $\mathcal{R}_i$ could be empty. 
As shown in~\cite[Section~3.2]{McAllister1996}, this location can be done in $O(\log n)$ time
using auxiliary information that is carried over during the sweep.
Once we have located $(a_i,b_i)$ in $\mathcal{R}_j$, we compare $f_j(a_i,b_i)$ with $c_i$
to decide whether or not $(a_i,b_i,c_i)$ is contained in $\mathcal{C}'_j$.
If $(a_i,b_i,c_i)$ belongs to $\mathcal{C}'_j$, with $j\neq i$, then the region $\mathcal{R}_i$ is empty,
and we can just ignore the existence of $\mathcal{C}_i$. Otherwise, $\mathcal{R}_i$ is not empty, and we have
to insert it into the minimization diagram and update the information associated
to $\ell$. Overall, we spend $O(\log n)$ time
per site event $(a_i,b_i)$, plus the time needed to find future vertex events triggered by the current 
site event.

Whenever the line $\ell$ passes through a point $(p_x,p_y)$, where $(p_x,p_y,p_z)\in P$,
we can apply the same binary search on the sweep line as for site events.
This means that in time $O(\log n)$ we locate the region $\mathcal{R}_j$ that contains $(p_x,p_y)$.
Then we check whether or not $(p_x,p_y,p_z)$ belongs to $\mathcal{C}'_j$; this would take an additional $O(\log n)$ time by the first claim in Lemma~\ref{le:basic-operations}.
Therefore, we can decide in $O(\log n)$ time whether or not the point $(p_x,p_y,p_z)\in P$ belongs
to any of the cones $\mathcal{C}'_1,\dots,\mathcal{C}'_n$.

At any event (site or vertex event) that changes the sequence of regions $\mathcal{R}_i$
intersected by $\ell$, we have to compute possible new vertex events. By Lemma~\ref{le:basic-operations}, this computation takes $O(\log n)$ time; the third claim in this lemma takes care of so-called vertices at infinity in~\cite{McAllister1996}. We note that in~\cite{McAllister1996} this step takes $O(\log n \log m)$ time
because of their general setting. 

To summarize, we have a linear number of events each taking $O(\log n)$ time. Therefore, we can decide the existence of a point-cone incidence in $O(n\log n)$ time.
This finishes the proof of Theorem~\ref{point-cone-thr}

\paragraph{Acknowledgement.}
This work initiated at the {\em Sixth Annual Workshop on Geometry and Graphs}, March 11-16, 2018, at the Bellairs Research Institute of McGill University, Barbados. The authors are grateful to the organizers and to the participants of this workshop. Segrio Cabello was supported by the Slovenian Research Agency, program P1-0297 
and projects J1-8130, J1-8155.

Ahmad Biniaz was supported by NSERC Postdoctoral Fellowship. Sergio Cabello was supported by the Slovenian Research Agency, program P1-0297 
and projects J1-8130, J1-8155. Paz Carmi was supported by grant 2016116 from the United States – Israel Binational Science Foundation. Jean-Lou De Carufel, Anil Maheshwari, and Michiel Smid were supported by NSERC. Saeed Mehrabi was supported by NSERC and by Carleton-Fields Postdoctoral Fellowship.
	
\bibliographystyle{abbrv}
\bibliography{Consistent-Subset}

\end{document}